\title{Sparse Non-Negative Recovery from Biased Subgaussian
  Measurements using NNLS}
\author{
  \IEEEauthorblockN{
    Yonatan Shadmi, Peter Jung, Giuseppe Caire}
  \IEEEauthorblockA{
    {\small\em yonatansha@gmail.com, \{peter.jung,caire\}@tu-berlin.de}
  }
}
\newtheorem{Theorem}{Theorem}
\newtheorem{definition}{Definition}
\newtheorem{lemma}{Lemma}
\newtheorem{corollary}{Corollary}
\newcommand{\vc}[1]{\mathbf{#1}}
\begin{document}

\maketitle

\begin{abstract}

  We investigate non-negative least squares (NNLS) for the recovery of
  sparse non-negative vectors from noisy linear and biased
  measurements.  We build upon recent results from \cite{c1} showing
  that for matrices whose row-span intersects the positive orthant, the
  nullspace property (NSP) implies compressed sensing recovery
  guarantees for NNLS. Such results are as good as for
  $\ell_1$-regularized estimators but do not require tuning parameters
  that depend on the noise level.
  A  bias in the sensing matrix improves this auto-regularization feature
  of NNLS and the NSP then determines the sparse recovery performance
  only. We show that NSP holds with high probability for
  biased
  subgaussian matrices and its quality is independent of the bias.
\end{abstract}

\begin{IEEEkeywords}
Compressed sensing, Sparsity, NNLS, bias, subgaussian, nullspace property
\end{IEEEkeywords}

\section{INTRODUCTION}

Compressed sensing (CS) algorithms based on $\ell_1$-regularization,
like LASSO or basis pursuit denoising (BPDN) etc., are among the most
well-known sparse recovery algorithms today, and as convex programs,
the preferred tools in many applications with well-investigated
recovery guarantees.  The idea of CS itself is based on the fact that
the intrinsic dimension of many signals or large data sets is
typically far less than their ambient dimensions, for example the
sparse representation of images, videos, audio data, network status
information like activity and novel coding techniques for wireless
communication.

However, in such practical real world applications the signals and
also the sensing matrices itself are also subject to further
constraints.  Sparsity or, more general, compressibility can be
regarded here as first order structure and the signals of interest
exhibit additional structure like block-sparsity, tree-sparsity and,
most importantly here, known sign patterns yielding to a non-negativity
constraint.  In particular such non-negative and sparse structures
also arise naturally in certain empirical inference problems, like
activity detection \cite{Caire:activitydetection,Wang:Caire},
network tomography \cite{Castro:2004}, statistical tracking (see e.g.\
\cite{Boyd2003}), compressed imaging of intensity patterns
\cite{Donoho92} and visible light communication based positioning
\cite{Gligoric:vlc:cs}.
Interestingly  non-negativity itself already
provides certain uniqueness guarantees and therefore the underlying
mathematical problem has received considerable attention in its own
right
\cite{Fuchs2005,Zhang:2010,Meinhausen,Slawsky,Foucart:NNLS:2014}.

Donoho et al. investigated in \cite{Donoho} the noiseless case in the
language of convex polytopes. They show that the sparsest non-negative
solutions can be found by convex optimization if the sparsity of the
sparsest solution is smaller than a fraction of the number of
equations.  Bruckstein, Elad and Zibulevsky \cite{Bruckstein}
investigated the uniqueness of non-negative and sparse solutions in
the noiseless case when also the entries of the sensing matrix are
non-negative, or more general, if the matrix has a row-span
intersecting the positive orthant (referred to as the
$\mathcal{M}^+$-criterion defined below).  They found that matrices belonging to the $\mathcal{M}^+$ class provide uniqueness
of non-negative sparse solutions and reconstruction therefore reduces
to just finding a feasible solution. In the noisy case, a meaningful
approach consists of replacing the search for an exact non-negative
solution of the set of linear equations by a good approximation
minimizing the residual in a certain norm (which usually depends
on further assumptions like its distribution etc.).
In the case of $\ell_2$--norms, it may be therefore
sufficient to solve the {\em Non-Negative Least squares} (NNLS).  This
relation is indeed important since NNLS requires no regularization
parameter (no additional tuning with respect to noise levels) which is often difficult to determine
for regularized problems like LASSO or basis pursuit denoising.

Slawsky and Hein discussed in \cite{Slawsky} the noisy case where the
random noise is Gaussian or subgaussian. They show (under a condition
similar to the $\mathcal{M}^+$-criterion, i.e., the self-regularizing
property) that NNLS intrinsically promotes sparsity. They further find
for subgaussian noise distributions reconstruction guarantees in the
form of upper bounds on the norm of the error vector, including the
$\ell_\infty$-case. This bound is important for hard thresholding for
sparse recovery.  Meinhausen proved in \cite{Meinhausen} similar
results for the $\ell_1$-norm of the error vector under different
assumptions on the measurements matrix. It seems also that the idea of
non-negativity as a particular conic constraint extends to other
cones. For example, Wang et al. \cite{PSDmatrices} established such
results also for the cone of positive semi-definite matrices.  Flinth
and Keiper investigated in \cite{Flinth} reconstruction of sparse
binary signals through box-constrained basis pursuit using biased
measurement matrices, and state conditions, under which the solution
can be found through box-constrained least squares instead.

In \cite{c1}, Kueng and Jung established reconstruction guarantees for
(adverserial) noise in terms of the Null Space Property (NSP) and the
$\cal{M}^+$-criterion. The conditions in \cite{c1} (NSP and the
${\cal M}^+$ criterion), play a similar role to the conditions in
\cite{Slawsky} (self regularizing property and the restricted
eigenvalues condition).  See here also \cite{Kabanava2015} for similar
steps in the low-rank matrix recovery case.  We base most of our
analysis on \cite{c1}.  Note that NSP is a sufficient and
necessary condition for the success of $\ell_1$-recovery programs like
BPDN given a correct bound on the noise power.  In contrast NNLS
always succeeds without having a-priori knowledge about the noise
power and instead the error scales in terms of the instantaneous noise
power. To have such a feature for BPDN one usually needs to
investigate the quotient property (see for example
\cite[Ch.11]{Foucart}).

{\bf Our contribution:} First, we review and extend the theory of
non-negative sparse recovery using NNLS. We consider then biased
subgaussian random $m\times n$ measurement matrices and illustrate
that a bias $\mu\geq 0$ improves the self-regularizing property of
NNLS.  However, proving NSP without taking care of the bias, for
example in using the small ball method \cite{Mendelson,c2}, yields
$\ell_2$-recovery guarantees for $s$--sparse vectors for
$m\gtrsim s\cdot(\sqrt{\log(n/s)}+\mu)^2$ observations. To overcome
this suboptimal scaling we combine a debiasing step with the small ball
method showing that NNLS has indeed the following 
(non-trivial) $\ell_q$-recovery guarantees (informal version): 

\begin{Theorem}
  Let $\vc{A}\in\mathbb{R}^{m\times n}$ be a
  random matrix whose biased rows  $\vc{a}_i=\vc{a}_{0,i}+\mu\vc{1}^T$  are independent and $\vc{a}_{0,i}$ are isotropic $1$-subgaussian  random vectors.
If for $q\geq 2$:
\[
m\gtrsim s^{2-2/q}\log(n/s)
\]
the following holds with overwelming probability:
For all $\vc{x}_0$ and $\vc{n}$, 
solution $\hat{\vc{x}}$ of the NNLS \eqref{eq:nnls} for $\vc{y}=\vc{A}\vc{x}_0+\vc{n}$ satisfies
for $p\in\left[1,q\right]$ the following error bound:
\[
\Vert\hat{\vc{x}}-\vc{x}_0\Vert_{p}\leq\frac{C}{s^{1-1/p}}\sigma_{s}\left(\vc{x}\right)_{1}+\frac{D}{s^{\frac{1}{q}-\frac{1}{p}}}\left(\frac{1}{\sqrt{m}\mu}+\tau\right)\Vert\vc{z}\Vert_2,
\]
with constants $C, D$ and $\tau$ defined below, and $\sigma_s\left(\vc{x}\right)_1$ is given in \eqref{eq:sigma}.  
\end{Theorem}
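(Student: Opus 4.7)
My approach is to deduce the statement from the deterministic NNLS reconstruction guarantee of \cite{c1}, which reduces $\ell_p$-recovery to two properties of the sensing matrix: an (approximate) $\mathcal{M}^+$-criterion, i.e.\ the existence of a dual certificate $\vc{t}\in\mathbb{R}^m$ with $\vc{A}^T\vc{t}\approx\vc{1}_n$ and small $\|\vc{t}\|_2$, and an $\ell_q$-robust null-space property (NSP) on $\vc{A}$ carrying the constants $C, D, \tau$. The $\mathcal{M}^+$-constant will supply the $1/(\sqrt m\,\mu)$ summand in the noise factor, the NSP the $\tau$ summand, and the $\ell_p$ shape for $p\in[1,q]$ then follows from the $\ell_q$-robust NSP by the standard $\ell_1$--$\ell_q$ interpolation built into \cite{c1}. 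Thus the task reduces to verifying both properties, with absolute constants, with overwhelming probability for the biased subgaussian ensemble.

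For the $\mathcal{M}^+$-criterion the natural dual certificate is $\vc{t}=\vc{1}_m/(m\mu)$, which gives $\vc{A}^T\vc{t}-\vc{1}_n=(m\mu)^{-1}\vc{A}_0^T\vc{1}_m$. Because $\vc{A}_0^T\vc{1}_m/\sqrt m$ has independent centered $O(1)$-subgaussian coordinates, Hoeffding's inequality followed by a union bound over the $n$ coordinates gives $\|(m\mu)^{-1}\vc{A}_0^T\vc{1}_m\|_\infty\lesssim\sqrt{\log n/(m\mu^2)}\leq 1/2$ in the regime $m\mu^2\gtrsim\log n$ implied by our sample complexity. An approximate $\mathcal{M}^+$-extension of the \cite{c1} framework (straightforward perturbation of the dual-certificate step) then applies with $\|\vc{t}\|_2=1/(\sqrt m\,\mu)$, producing the first summand of the noise term.

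The essential obstacle is the $\ell_q$-robust NSP. Running Mendelson's small-ball method directly on $\vc{A}$ produces an empirical-width contribution of order $\mu\sup_{\vc{v}\in T_{s,q}}|\vc{1}_n^T\vc{v}|\lesssim\mu\,s^{1-1/q}$ on the $\ell_q$-effective cone $T_{s,q}\subset S^{n-1}$ (using the elementary bound $\|\vc{v}\|_1\leq s^{1-1/q}$ on $T_{s,q}$), which inflates the required $m$ by an additive $\mu^2 s^{2-2/q}$ and is precisely the suboptimal scaling $m\gtrsim s^{2-2/q}(\log(n/s)+\mu^2)$ the paper sets out to avoid. To remove $\mu$ from the NSP I debias via the orthogonal projection $P=I_m-m^{-1}\vc{1}_m\vc{1}_m^T$: since $P\vc{A}=P\vc{A}_0$, one has the deterministic lower bound
\[
\|\vc{A}\vc{v}\|_2^2 \;\geq\; \|P\vc{A}_0\vc{v}\|_2^2 \;=\; \|\vc{A}_0\vc{v}\|_2^2-\tfrac{1}{m}\bigl(\vc{1}_m^T\vc{A}_0\vc{v}\bigr)^2,
\]
whose right-hand side is free of $\mu$.

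To close the argument I apply the small-ball method to the unbiased matrix $\vc{A}_0$ over $T_{s,q}$: isotropy and $1$-subgaussianity of the rows make the small-ball function at a constant level a universal constant, while a routine union-over-supports plus $\ell_q/\ell_2$ duality estimate gives the Gaussian width $w(T_{s,q})\lesssim s^{1-1/q}\sqrt{\log(n/s)}$, so Mendelson's inequality forces $\|\vc{A}_0\vc{v}\|_2\gtrsim\sqrt m\,\|\vc{v}\|_2$ uniformly on $T_{s,q}$ whenever $m\gtrsim s^{2-2/q}\log(n/s)$. The subtracted correction $(h^T\vc{v})^2$ with $h:=\vc{A}_0^T\vc{1}_m/\sqrt m$ a subgaussian vector is controlled by H\"older, $\|h\|_\infty\lesssim\sqrt{\log n}$ together with $\|\vc{v}\|_1\leq s^{1-1/q}$, yielding $(h^T\vc{v})^2\lesssim s^{2-2/q}\log n$, which is absorbed by our sample complexity. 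Feeding the resulting $\mu$-independent NSP and the $\mathcal{M}^+$ estimate into the \cite{c1} bound produces the stated inequality. I expect the Gaussian-width calculation on $T_{s,q}$ and the bookkeeping of absolute constants across the $\ell_1$--$\ell_q$ interpolation to be the most delicate technical steps; the conceptual hurdle is the debiasing trick that decouples the $\mathcal{M}^+$-benefit of $\mu$ from its harmful effect on the empirical width.
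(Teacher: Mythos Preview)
Your overall architecture is the same as the paper's: verify the $\mathcal{M}^+$-criterion via the certificate $\vc{t}=\vc{1}_m/(m\mu)$ and Hoeffding, verify a $\mu$-independent $\ell_q$-NSP by a debiasing step combined with Mendelson's small-ball method, and then plug into the deterministic NNLS guarantee of \cite{c1}. The $\mathcal{M}^+$ part and the invocation of \cite{c1} are essentially identical to what the paper does; no ``approximate $\mathcal{M}^+$-extension'' is needed, since the condition number $\kappa$ in \cite{c1} already absorbs the perturbation $\|\vc{A}^T\vc{t}-\vc{1}\|_\infty\le 1/2$.

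Where you genuinely diverge is in the debiasing step for the NSP. The paper pairs consecutive rows and passes to $\vc{b}_i=(\vc{a}_{2i-1}-\vc{a}_{2i})/\sqrt{2}$, so that $\|\vc{A}\vc{v}\|_2\ge\|\vc{B}\vc{v}\|_2$ deterministically (Cauchy--Schwarz and reverse triangle inequality), and then runs Mendelson on the centered $\lfloor m/2\rfloor\times n$ matrix $\vc{B}$. Your route instead projects out the all-ones direction in $\mathbb{R}^m$, obtaining $\|\vc{A}\vc{v}\|_2^2\ge\|\vc{A}_0\vc{v}\|_2^2-m^{-1}(\vc{1}_m^T\vc{A}_0\vc{v})^2$, runs Mendelson on the full $\vc{A}_0$, and separately bounds the rank-one correction. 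Both decouple the bias from the width term; the paper's version is cleaner (no residual to control) at the price of halving $m$, while yours keeps all $m$ rows but introduces the correction $(h^T\vc{v})^2$. Your control of that correction via $\|h\|_\infty\|\vc{v}\|_1$ yields $s^{2-2/q}\log n$ rather than $s^{2-2/q}\log(en/s)$, so strictly speaking your absorption step needs $m\gtrsim s^{2-2/q}\log n$; this is a harmless inflation in the usual regime $s\ll n$ but is slightly weaker than the paper's bound when $s$ is a fixed fraction of $n$. Apart from this logarithmic slack, your argument is correct and constitutes a legitimate alternative debiasing proof.
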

The complete statement is given in Theorem \ref{thm:main} in Section \ref{sec:main} below.

\section{SYSTEM MODEL AND OBJECTIVES}
We consider the problem of recovering a non-negative and sparse vector 
$\vc{x}_0\in\mathbb{R}_+^n$ from noisy linear observations of the form:
\[
\vc{y}=\vc{A}\vc{x}_0+\vc{n}\in\mathbb{R}^m
\]
where $\vc{A}\in\mathbb{R}^{m\times n}$ is the measurement
(sensing) matrix and $\vc{n}\in\mathbb{R}^m$ denotes additive
noise.  For example, $\vc{x}_0$ could be $s$-sparse meaning that its
non-zero entries are supported on a small subset
$S\subset\left[1\dots n\right]$ of cardinality $\left|S\right|\leq
s$.
In this work we assume that the sensing matrix $\vc{A}$ is a known
but random matrix, and we will investigate here distributions of
$\vc{A}$ which allow a robust and stable reconstruction of
$\vc{x}_0$ with overwhelming probability.  By ``robust and stable''
we mean that the error will scale appropriately with respect to the noise
power and the algorithm essentially recovers also approximately sparse vectors. Such results are known, for example,
for the basis pursuit denoising (BPDN) \cite{Foucart} which is the
following convex program:
\begin{equation}
\hat{\vc{x}}_{\text{BPDN}}=\underset{\vc{x}\in\mathbb{R}^n}{\arg \min} \Vert\vc{x}\Vert_1\hspace{1em}\textrm{s.t.}\hspace{1em}\Vert\vc{y}-\vc{A}\vc{x}\Vert_2\leq\eta, 
\label{eq:bpdn}
\end{equation}
Briefly, if $\eta$ is chosen such that
$\|\vc{y}-\vc{A}\vc{x}_0\|_2\leq\eta$ and if the sensing matrix
$\vc{A}$ has the robust null space property, there are upper bounds on reconstruction error, like
$\|\vc{x}_0-\hat{\vc{x}}_{\text{BPDN}}\|_2$, as a function of the
noise variance and the error of the best $s$-term approximation of
$\vc{x}_0$, see \cite{Foucart}.  However, in a practical setting one
often has to estimate the noise level $\eta$ first and also the
optimal tuning of compressed sensing algorithms itself is a difficult
task.  Even more, there are applications where this assumption is
critical since the noise level may depend on the unknown vector to
recover. Prototypical examples are sparse recovery under Poisson noise
model \cite{Raginsky:2010} or covariance matching problems like in
\cite{Caire:activitydetection}.

\subsection{Null Space Properties}
A well-known tool to characterize the performance of
$\ell_1$-regularized programs like the BPDN \eqref{eq:bpdn} above is
the following formulation of the robust null space property ($\ell_q$-NSP)
with respect to the $\ell_2$--norm \cite{Foucart}:
\begin{definition}\label{def-q_NSP}
  For some $q\geq 1$, a matrix $\vc{A}$ is said to have the
  $\ell_q$-robust null space property with respect to
  the $\ell_2$--norm ($\ell_q$-NSP)  with parameters $\rho\in\left(0,1\right)$
  and $\tau>0$ if $\forall\vc{v}\in\mathbb{R}^{n}$ and
  $\forall S\subset\left\{ 1,...,n\right\}$ with $\left|S\right|\leq s$:
  \begin{equation}
    \Vert\vc{v}_{S}\Vert_{q}\leq\frac{\rho}{s^{1-1/q}}\Vert\vc{v}_{S^{C}}\Vert_{1}+\tau\Vert\vc{A}\vc{v}\Vert_{2}
    \label{eq:def:NSP}
  \end{equation}
where $\vc{v}_S$ is the
vector build from $\vc{v}$ but all indices not in $S$ are set to zero, and $S^{C}$ is the complement of $S$.
\end{definition}

Due to the inequality $\Vert\vc{v}_S\Vert_p\leq s^{1/p-1/q}\Vert\vc{v}_S\Vert_q$ for $1\leq p\leq q$, the $\ell_q$-NSP implies the $\ell_p$-NSP for $p\in[1,q]$ w.r.t. the norm $s^{1/p-1/q}\Vert\cdot\Vert_2$ in the form
\begin{equation}\label{eq:p<q-NSP}
\Vert\vc{v}_{S}\Vert_{p}\leq\frac{\rho}{s^{1-1/p}}\Vert\vc{v}_{S^{C}}\Vert_{1}+s^{1/p-1/q}\tau\Vert\vc{A}\vc{v}\Vert_{2}.
\end{equation}
Let us recall the definition of the error of the best $s$--term
approximation of $\vc{x}$ in the $\ell_p$--norm which is:
\begin{equation}\label{eq:sigma}
  \sigma_{s}\left(\vc{x}\right)_{p}=\inf_{\vc{z}:\Vert\boldsymbol{z}\Vert_{0}\leq
    s}\Vert\vc{x}-\vc{z}\Vert_{p}
\end{equation}
Now, \cite[Theorem 4.25]{Foucart} states the following:
\begin{lemma}\label{thm 4.25 foucart}
  Given $1\leq p\leq q$, suppose that
  $\vc{A}\in\mathbb{R}^{m\times n}$ has the $\ell_q$-NSP
  with $0\leq\rho< 1$ and $\tau> 0$. Then, for any
  $\vc{x},\vc{z}\in\mathbb{R}^n$,
  \begin{equation*}
    \begin{split}
    \Vert\vc{z}-\vc{x}\Vert_{p}\leq\frac{C}{s^{1-1/p}}\left(\Vert\vc{z}\Vert_{1}-\Vert\vc{x}\Vert_{1}+2\sigma_{s}\left(\vc{x}\right)_{1}\right)
    \\ +Ds^{1/p-1/q}\Vert\vc{A}\left(\vc{z}-\vc{x}\right)\Vert_{2}
    \end{split}
  \end{equation*}
with $C=\frac{\left(1+\rho\right)^{2}}{1-\rho}$ , $D=\frac{\left(1+\rho\right)\tau}{1-\rho}$
\end{lemma}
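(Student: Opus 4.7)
My proof would revolve around the error vector $\vc{v} := \vc{z}-\vc{x}$ and around $S$ chosen as the support of the best $s$-term approximation of $\vc{x}$, so that $\|\vc{x}_{S^C}\|_1 = \sigma_s(\vc{x})_1$. The plan is to first derive a sharp $\ell_1$-control on the off-support mass $\|\vc{v}_{S^C}\|_1$ from the NSP, and then upgrade this $\ell_1$-bound into the desired $\ell_p$-bound via a block decomposition, in the spirit of the proof of Theorem 4.22 in Foucart--Rauhut.

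Step A ($\ell_1$-master estimate). From $\|\vc{z}\|_1 = \|\vc{x}_S+\vc{v}_S\|_1 + \|\vc{x}_{S^C}+\vc{v}_{S^C}\|_1$ applied componentwise together with the reverse triangle inequality, one extracts
\[
\|\vc{v}_{S^C}\|_1 \;\leq\; \|\vc{z}\|_1 - \|\vc{x}\|_1 + 2\sigma_s(\vc{x})_1 + \|\vc{v}_S\|_1 .
\]
Bounding $\|\vc{v}_S\|_1 \leq s^{1-1/q}\|\vc{v}_S\|_q$ and invoking the $\ell_q$-NSP \eqref{eq:def:NSP} gives $\|\vc{v}_S\|_1 \leq \rho\|\vc{v}_{S^C}\|_1 + s^{1-1/q}\tau\|\vc{A}\vc{v}\|_2$. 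Substituting and solving for $\|\vc{v}_{S^C}\|_1$ yields
\[
\|\vc{v}_{S^C}\|_1 \;\leq\; \frac{\|\vc{z}\|_1-\|\vc{x}\|_1+2\sigma_s(\vc{x})_1}{1-\rho} + \frac{s^{1-1/q}\tau}{1-\rho}\|\vc{A}\vc{v}\|_2.
\]

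Step B (upgrade to $\ell_p$). Split $\|\vc{v}\|_p \leq \|\vc{v}_S\|_p + \|\vc{v}_{S^C}\|_p$. The first term is directly controlled by the $\ell_p$-form of the NSP \eqref{eq:p<q-NSP} and contributes a term of the order $\rho s^{-(1-1/p)}\|\vc{v}_{S^C}\|_1 + s^{1/p-1/q}\tau\|\vc{A}\vc{v}\|_2$. For the second, I would do the standard Stechkin-type dyadic block decomposition: enumerate the entries of $\vc{v}_{S^C}$ in decreasing order of magnitude and partition them into consecutive blocks $T_1, T_2, \dots$ of size $s$; apply \eqref{eq:p<q-NSP} to each block and sum, using the elementary $\|\vc{v}_{T_{k+1}}\|_p \leq s^{1/p-1}\|\vc{v}_{T_k}\|_1$ coming from the sorting. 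After a telescoping, this gives a clean bound of the form
\[
\|\vc{v}\|_p \;\leq\; \frac{1+\rho}{s^{1-1/p}}\,\|\vc{v}_{S^C}\|_1 + s^{1/p-1/q}\tau\,\|\vc{A}\vc{v}\|_2.
\]

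Step C (combine). Inserting the bound from Step A into the bound from Step B and collecting the coefficients $\tfrac{1+\rho}{1-\rho}$ and $\tfrac{1+\rho}{1-\rho}s^{1-1/q}\cdot s^{-(1-1/p)} + s^{1/p-1/q}$ produces exactly the constants $C = (1+\rho)^2/(1-\rho)$ and $D=(1+\rho)\tau/(1-\rho)$ claimed in the statement.

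The main obstacle is Step B: the Stechkin-style block argument that transforms an $\ell_1$-control on $\|\vc{v}_{S^C}\|_1$ into a full $\ell_p$-control on $\|\vc{v}\|_p$ with the right powers of $s$. The $\ell_1$-to-$\ell_p$ summation on the tail blocks is where the $(1+\rho)$ rather than $\rho$ appears, so keeping track of the constants precisely (to land on $(1+\rho)^2/(1-\rho)$ instead of a weaker constant) is the delicate part. Once this intermediate inequality is in hand, Step C is routine algebra.
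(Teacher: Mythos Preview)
The paper does not give its own proof of this lemma at all; it is simply quoted as \cite[Theorem~4.25]{Foucart}. Your three-step outline (Step~A: cone/$\ell_1$-estimate on $\|\vc{v}_{S^C}\|_1$; Step~B: Stechkin-type block upgrade to $\ell_p$; Step~C: substitution) is exactly the argument used in Foucart--Rauhut to prove that theorem, so there is nothing to compare against in the present paper.

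One bookkeeping point: your Step~B as written, $\|\vc{v}\|_p \leq (1+\rho)s^{1/p-1}\|\vc{v}_{S^C}\|_1 + s^{1/p-1/q}\tau\|\vc{A}\vc{v}\|_2$, substituted into Step~A, yields $C=(1+\rho)/(1-\rho)$ and $D=2\tau/(1-\rho)$, not the stated $C=(1+\rho)^2/(1-\rho)$ and $D=(1+\rho)\tau/(1-\rho)$; so Step~C does not ``produce exactly'' those constants from your Steps~A and~B. The discrepancy comes from the treatment of the first tail block $T_1$: since $S$ is the support of the best $s$-term approximation of $\vc{x}$ (not of $\vc{v}$), the sorting inequality $\|\vc{v}_{T_1}\|_p \leq s^{1/p-1}\|\vc{v}_{S}\|_1$ need not hold, and one must invoke the NSP once more to control $\|\vc{v}_{T_1}\|_p$, which is where the extra $(1+\rho)$ and the modified $\tau$-term enter. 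You already flag this as ``the delicate part'', but your displayed Step~B does not yet reflect it.
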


Note that the requirement \eqref{eq:def:NSP} in Definition
\ref{def-q_NSP} holds for any subset $S$ of cardinality at most $s$ if
it holds for the subset $S_{\text{max}}$ containing the strongest $s$
components. Let us define $\vc{v}_s:=\vc{v}_{S_{\text{max}}}$ and
$\vc{v}_c=\vc{v}_{S_{\text{max}}^C}$.  Property \eqref{eq:def:NSP} is
also invariant with respect to re-scaling, i.e., wlog we may assume
here $\Vert\vc{v}\Vert_{q}=1$.  In addition, any vector $\vc{v}$ which
satisfies
$\Vert\vc{v}_{s}\Vert_{q}\leq\frac{\rho}{s^{1-1/q}}\Vert\vc{v}_{c}\Vert_{1}$
fulfills this condition independently of  $\vc{A}$, so we
restrict our attention to the set:
\[
T_{\rho,s}^q=\left\{ \vc{v}\in\mathbb{R}^{n}\,:\,\Vert\vc{v}_{s}\Vert_{q}>\frac{\rho}{s^{1-1/q}}\Vert\vc{v}_{c}\Vert_{1}\,,\,\Vert\vc{v}\Vert_{q}=1\right\}. 
\]
Notice, that a matrix $\vc{A}$ has the $\ell_q$--NSP of order $s$ with parameters
$\rho$ and $\tau>0$ if the following bound holds:
\begin{equation}\label{NSPcond}
\inf_{\vc{v}\in T_{\rho,s}^q}\Vert\vc{A\vc{v}}\Vert_2\geq\frac{1}{\tau}.
\end{equation}
Then we have for $\vc{v}\in T_{\rho,s}^q$:
\[
\Vert\vc{v}_{s}\Vert_{q}\leq\Vert\vc{v}\Vert_{q}=1\leq\tau\Vert\vc{A}\vc{v}\Vert_{2}\leq\frac{\rho}{s^{1-1/q}}\Vert\vc{v}_{c}\Vert_{1}+\tau\Vert\vc{A}\vc{v}\Vert_{2}.
\]
\subsection{Non-negative Sparse Recovery via NNLS}
Theorem 4 in \cite{c1} states that non-negative and $s$--sparse signals can
be robustly and stably recovered (precise statement
is Theorem \ref{thm:NNLS bound}, below) with the non-negative least squares
(NNLS):
\begin{equation}
\hat{\vc{x}}_{\text{NNLS}}=\underset{\vc{x}\in\mathbb{R}_+^n}{\arg\min}\Vert\vc{y}-\vc{A}\vc{x}\Vert_2,
\label{eq:nnls}
\end{equation}
provided that the sensing matrix $\vc{A}$ has both the $\ell_2$-NSP of
order $s$ and it satisfies the following ${\cal M}^+$-criterion.
\begin{definition}
  A matrix $\vc{A}\in\mathbb{R}^{m\times n}$ satisfies the
  ${\cal M}^+$ criterion if
  \begin{equation}
    \vc{A}\in{\cal M}^+=\left\{\vc{M}\in\mathbb{R}^{m\times n}\hspace{0.5em}\bigg|\hspace{0.5em}\exists\vc{t}\in\mathbb{R}^m\hspace{0.5em}:\hspace{0.5em}\vc{M}^T\vc{t}>\vc{0}\right\}.
    \label{eq:mplus}
  \end{equation}
\end{definition}
We like to mention here that this can be formulated for complex
matrices in the usual way.  The NNLS in \eqref{eq:nnls} has the
appealing advantage, that it does not require a-priori knowledge of
some $\eta$ such that $\Vert\vc{y}-\vc{A}\vc{x}_0\Vert_2\leq\eta$.  We repeat Theorem
4 from \cite{c1} because of its importance to this work, but first we
need to define the following condition number of a matrix as:
\begin{multline}\label{eq:con num kappa}
\kappa\left(\vc{A}\right)=\min\bigg\{  \Vert\vc{W}\Vert\Vert\vc{W}^{-1}\Vert: \\  \,\,\exists\vc{t}\in\mathbb{R}^{m}\text{with}\,\,\vc{W}=\text{diag}\left(\vc{A}^{T}\vc{t}\right)>0\bigg\} 
\end{multline}
with $\Vert\vc{W}\Vert$ being the spectral norm of the diagonal matrix $\vc{W}$.

\begin{Theorem}[Theorem 4 in \cite{c1}]
  \label{thm:NNLS bound}
  Let $\vc{A}$ be a matrix satisfying both $\ell_{q}$-NSP of order $s$
  with constants $0<\rho<1$ and $\tau>0$ and the ${\cal M}^{+}$
  criterion with $\kappa$ achieved for
  $\vc{t}$. Assume in addition that $\kappa\rho<1$, then
  for $1\leq p\leq q$:
  \[
    \Vert\hat{\vc{x}}-\vc{x}_0\Vert_{p}\leq\frac{C}{s^{1-1/p}}\sigma_{s}\left(\vc{x}\right)_{1}+
    \frac{D}{s^{\frac{1}{q}-\frac{1}{p}}}\left(\Vert\vc{t}\Vert_{2}+\tau\right)\Vert\vc{n}\Vert_2,
    \label{eq:thm:NNLS bound}
  \]
  where $C=2\frac{\kappa\left(1+\kappa\rho\right)^{2}}{1-\kappa\rho}$,
  $D=2\frac{3+\kappa\rho}{1-\kappa\rho}\max\left\{ \kappa,\Vert\vc{W}^{-1}\Vert\right\} $
  and $\hat{\vc{x}}$ is the solution of NNLS in \eqref{eq:nnls} for $\vc{y}=\vc{A}\vc{x}_0+\vc{n}$.
\end{Theorem}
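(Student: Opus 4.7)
The plan is to reduce the claim to Lemma \ref{thm 4.25 foucart} applied with $\vc{z}=\hat{\vc{x}}$ and $\vc{x}=\vc{x}_0$. That lemma leaves two quantities to control: the residual $\|\vc{A}(\hat{\vc{x}}-\vc{x}_0)\|_2$ and the $\ell_1$ surplus $\|\hat{\vc{x}}\|_1-\|\vc{x}_0\|_1$. The residual is essentially free: since $\vc{x}_0\in\mathbb{R}_+^n$ is feasible for \eqref{eq:nnls}, optimality of $\hat{\vc{x}}$ gives $\|\vc{y}-\vc{A}\hat{\vc{x}}\|_2\leq\|\vc{n}\|_2$, and the triangle inequality yields $\|\vc{A}(\hat{\vc{x}}-\vc{x}_0)\|_2\leq 2\|\vc{n}\|_2$.

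The key point is to control the surplus via the $\mathcal{M}^+$ criterion. Let $\vc{h}=\hat{\vc{x}}-\vc{x}_0$ and decompose $\vc{h}=\vc{h}^+-\vc{h}^-$ into its positive and negative parts. Because $\vc{x}_0,\hat{\vc{x}}\geq\vc{0}$, the support of $\vc{h}^-$ is contained in the support of $\vc{x}_0$; in particular $\vc{h}^-$ is at most $s$-sparse when $\vc{x}_0$ is exactly $s$-sparse. Letting $\vc{w}=\vc{A}^T\vc{t}>\vc{0}$ and $\vc{W}=\text{diag}(\vc{w})$, Cauchy--Schwarz on $\vc{w}^T\vc{h}=\vc{t}^T\vc{A}\vc{h}$ gives $|\vc{w}^T\vc{h}|\leq 2\|\vc{t}\|_2\|\vc{n}\|_2$. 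Combined with the coordinatewise estimates $\vc{1}^T\vc{h}^+\leq\|\vc{W}^{-1}\|\,\vc{w}^T\vc{h}^+$ and $\vc{w}^T\vc{h}^-\leq\|\vc{W}\|\,\vc{1}^T\vc{h}^-$ (valid because $\vc{h}^\pm\geq\vc{0}$), this yields
\[
\|\hat{\vc{x}}\|_1-\|\vc{x}_0\|_1=\vc{1}^T\vc{h}\leq 2\|\vc{W}^{-1}\|\,\|\vc{t}\|_2\|\vc{n}\|_2+\kappa\,\|\vc{h}^-\|_1.
\]

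The main obstacle is that $\|\vc{h}^-\|_1$ reappears on the right, so plugging into Lemma \ref{thm 4.25 foucart} produces a self-referential bound. To close it, apply the $\ell_q$-NSP \eqref{eq:def:NSP} a second time with $S=\text{supp}(\vc{h}^-)$, $|S|\leq s$: since $\vc{h}^-$ is supported in $S$ one has $\|\vc{h}^-\|_q\leq\|\vc{h}_S\|_q$, and Hölder gives $\|\vc{h}^-\|_1\leq s^{1-1/q}\|\vc{h}_S\|_q\leq\rho\|\vc{h}_{S^c}\|_1+s^{1-1/q}\tau\|\vc{A}\vc{h}\|_2$. Substituting both estimates into Lemma \ref{thm 4.25 foucart} yields an inequality whose right-hand side contains $\|\vc{h}_{S^c}\|_1$ with coefficient proportional to $\kappa\rho$; iterating once more using the $\ell_p$-NSP \eqref{eq:p<q-NSP} allows this residual $\ell_1$ tail to be re-expressed in terms of $\|\vc{h}\|_p$ with an overall multiplier strictly less than one precisely under the hypothesis $\kappa\rho<1$. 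Solving the resulting inequality for $\|\vc{h}\|_p$ and tracking the coefficients of $\|\vc{n}\|_2$, $\|\vc{t}\|_2$, and $\|\vc{W}^{-1}\|$ through the absorption step produces the stated constants $C$ and $D$. Finally, the compressible case follows by the standard reduction: replace $\text{supp}(\vc{x}_0)$ by $S_{\text{max}}$ and pick up the $2\sigma_s(\vc{x})_1$ term already provided by Lemma \ref{thm 4.25 foucart}.
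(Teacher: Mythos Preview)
The paper does not actually prove this theorem; it quotes \cite{c1} for $p=q=2$ and remarks that the general $1\le p\le q$ case follows from \eqref{eq:p<q-NSP} together with Lemma~\ref{thm 4.25 foucart}. The argument of \cite{c1}, which you can read off from the specific form of the constants, is a \emph{change of variables}: with $\vc{W}=\mathrm{diag}(\vc{A}^{T}\vc{t})$ one checks that $\vc{A}\vc{W}^{-1}$ has the $\ell_q$-NSP with parameters $(\kappa\rho,\|\vc{W}\|\tau)$, and then applies Lemma~\ref{thm 4.25 foucart} to $\vc{A}\vc{W}^{-1}$ with $\vc{z}=\vc{W}\hat{\vc{x}}$ and $\vc{x}=\vc{W}\vc{x}_0$. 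The gain is that for nonnegative vectors $\|\vc{W}\hat{\vc{x}}\|_1-\|\vc{W}\vc{x}_0\|_1=\vc{t}^{T}\vc{A}(\hat{\vc{x}}-\vc{x}_0)\le 2\|\vc{t}\|_2\|\vc{n}\|_2$ \emph{directly}, with no $\vc{h}^{\pm}$ split and no self-reference; this is precisely what produces the factor $(1+\kappa\rho)^{2}/(1-\kappa\rho)$ and the prefactor $\kappa$ in $C$.

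Your route is genuinely different and, as written, has a gap in the closure step. After inserting your surplus bound into Lemma~\ref{thm 4.25 foucart} you are left with $\|\vc{h}_{S^{c}}\|_1$ on the right; you propose to ``re-express it in terms of $\|\vc{h}\|_p$ via \eqref{eq:p<q-NSP}'', but \eqref{eq:p<q-NSP} bounds $\|\vc{h}_S\|_p$ by $\|\vc{h}_{S^{c}}\|_1$, not the other way around, and there is no inequality $\|\vc{h}_{S^{c}}\|_1\lesssim s^{1-1/p}\|\vc{h}\|_p$ available to absorb into the left side. What \emph{does} close the loop is to stay away from Lemma~\ref{thm 4.25 foucart} at this stage: with $S=\mathrm{supp}(\vc{h}^{-})$ one has $\|\vc{h}_{S^{c}}\|_1=\|\vc{h}^{+}\|_1$, so the NSP gives $\|\vc{h}^{-}\|_1\le\rho\|\vc{h}^{+}\|_1+2s^{1-1/q}\tau\|\vc{n}\|_2$, and combining this with your $\mathcal{M}^{+}$ estimate $\|\vc{h}^{+}\|_1\le 2\|\vc{W}^{-1}\|\|\vc{t}\|_2\|\vc{n}\|_2+\kappa\|\vc{h}^{-}\|_1$ solves for $\|\vc{h}^{-}\|_1$ under exactly $\kappa\rho<1$. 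This repair is sound, but it does \emph{not} reproduce the stated constants: you end up with $(1+\rho)^{2}/(1-\rho)$ from Lemma~\ref{thm 4.25 foucart} applied to $\vc{A}$, multiplied by a further $1/(1-\kappa\rho)$ from the closure, rather than $2\kappa(1+\kappa\rho)^{2}/(1-\kappa\rho)$. Finally, the compressible case is not the ``standard reduction'' you invoke: when $\vc{x}_0$ is not $s$-sparse, $\mathrm{supp}(\vc{h}^{-})$ can have cardinality larger than $s$, so you cannot take $S=\mathrm{supp}(\vc{h}^{-})$ in the NSP. One must instead use $h_i^{-}\le(\vc{x}_0)_i$ to bound $\|\vc{h}^{-}_{S_{\max}^{c}}\|_1\le\sigma_s(\vc{x}_0)_1$ separately; this extra term then propagates through the closure. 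The weighting approach of \cite{c1} avoids all of this because Lemma~\ref{thm 4.25 foucart} already carries the $\sigma_s$ term and no sparsity of $\vc{h}^{-}$ is ever used.
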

Note that this theorem has been presented in \cite[Theorem 4]{c1} only
for $p=q=2$. However, its extension to $1\leq p\leq q$ is immediate by
using the fact \eqref{eq:p<q-NSP} that $\ell_p$-robust nullspace
property with respect to a norm $\|\cdot\|$ ($\ell_2$--norm in our
case) implies $\ell_q$-robust nullspace property with respect to the
norm $s^{1/p-1/q}\|\cdot\|$ (see the proof of \cite[Theorem 4]{c1} and
combine this with Lemma \ref{thm 4.25 foucart} or \cite[Theorem
4.25]{Foucart}, respectively).  Furthermore, a closer inspection of
the proof of \cite[Theorem 4]{c1} also shows that one could easily 
replace $\Vert\vc{t}\Vert_{2}$ in Theorem \ref{thm:NNLS bound} with
$\Vert\vc{t}\Vert_{2}/s^{1-1/q}$ which may have impact when
$\|\vc{t}\|_{2}$ has some undesired scaling.

\subsection{Nullspace Properties through the Small Ball Method}
A well-known tool to prove that a random matrix $\vc{A}$ (with
independent rows) has the nullspace property, i.e. \eqref{NSPcond}
holds then with high probability, is Mendelson's small ball method
\cite{Mendelson} (see here also \cite{c2}). This method is essentially
the following theorem:
\begin{Theorem}[\cite{Mendelson,c2}] \label{thm:mendelson}
Fix a set $E\subset\mathbb{R}^n$. Let the rows of a matrix $\vc{A}\in\mathbb{R}^{m\times n}$, $\vc{a}_{1},...,\vc{a}_{m}$, be independent copies
of a random vector $\vc{a}\in\mathbb{R}^n$. Define $\vc{h}=\frac{1}{\sqrt{m}}\sum_{k=1}^{m}\epsilon_{k}\vc{a}_{k}$, where
$\left\{ \epsilon_{k}\right\}_{k=1}^m $ is a Rademacher sequence\footnote{$X$ is a Rademacher variable if $X=1$ or $-1$ with equal probability.}. Then, for $t>0$ and $\xi>0$, the bound 
\[
\inf_{\vc{v}\in E}\Vert\vc{A}\vc{v}\Vert_{2}\geq\xi\sqrt{m}Q_{2\xi}\left(E,\vc{a}\right)-\xi t-2W_{m}\left(E,\vc{a}\right)
\]
with:
\[
Q_{\xi}\left(E,\vc{a}\right)=\inf_{\vc{u}\in E}\mathbb{P}\left(\left|\left\langle \vc{a},\vc{u}\right\rangle \right|\geq\xi\right)
\]
and
\[
W_{m}\left(E,\vc{a}\right)=\mathbb{E}\left[\sup_{\vc{u}\in E}\left\langle \vc{h},\vc{u}\right\rangle \right],
\]
holds with probability at least $1-\text{e}^{-2t^2}$.
\end{Theorem}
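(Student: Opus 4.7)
The plan is to reduce the uniform lower bound on $\|\vc{A}\vc{v}\|_2$ to a uniform deviation inequality for an empirical process over $E$, and then dispatch that deviation via McDiarmid's bounded differences inequality together with a symmetrization–contraction argument that produces exactly the Rademacher process in $W_m(E,\vc{a})$.

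First, I would pass from $\|\vc{A}\vc{v}\|_2$ to a counting functional. Using the trivial lower bound $\|\vc{A}\vc{v}\|_2^2 \geq \xi^2 |\{k: |\langle\vc{a}_k,\vc{v}\rangle|\geq \xi\}|$ together with the fact that a count $N\leq m$ obeys $\sqrt{N}\geq N/\sqrt{m}$, one gets
\[
\|\vc{A}\vc{v}\|_2 \;\geq\; \frac{\xi}{\sqrt{m}}\sum_{k=1}^m \mathbb{1}\{|\langle\vc{a}_k,\vc{v}\rangle|\geq \xi\}.
\]
To turn this into something amenable to contraction, I would replace the indicator by a $\tfrac{1}{\xi}$–Lipschitz surrogate $\psi_\xi:\mathbb{R}\to[0,1]$ with $\psi_\xi(0)=0$ that vanishes on $[-\xi,\xi]$ and equals $1$ outside $[-2\xi,2\xi]$. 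Then $\mathbb{1}\{|t|\geq \xi\}\geq \psi_\xi(t)\geq \mathbb{1}\{|t|\geq 2\xi\}$, so taking expectations yields $\mathbb{E}\psi_\xi(\langle\vc{a},\vc{v}\rangle)\geq Q_{2\xi}(E,\vc{a})$ for every $\vc{v}\in E$.

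Second, I would control the uniform deviation of the empirical mean $m^{-1}\sum_k \psi_\xi(\langle\vc{a}_k,\vc{v}\rangle)$ from its expectation. Since $\psi_\xi\in[0,1]$, replacing any single $\vc{a}_k$ alters $Z:=\sup_{\vc{v}\in E}\!\bigl(\mathbb{E}\psi_\xi(\langle\vc{a},\vc{v}\rangle)-m^{-1}\sum_k \psi_\xi(\langle\vc{a}_k,\vc{v}\rangle)\bigr)$ by at most $1/m$, so McDiarmid's inequality gives $Z\leq \mathbb{E}Z+t/\sqrt{m}$ with probability at least $1-e^{-2t^2}$. The usual symmetrization bound $\mathbb{E}Z\leq 2\,\mathbb{E}\sup_{\vc{v}\in E} m^{-1}\sum_k \epsilon_k \psi_\xi(\langle\vc{a}_k,\vc{v}\rangle)$, followed by the Ledoux–Talagrand contraction principle applied to the $\tfrac{1}{\xi}$–Lipschitz map $\psi_\xi$ (vanishing at the origin), gives $\mathbb{E}Z \leq \tfrac{2}{\xi}\cdot m^{-1/2}\,\mathbb{E}\sup_{\vc{v}\in E}\langle \vc{h},\vc{v}\rangle = 2 W_m(E,\vc{a})/(\xi\sqrt{m})$.

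Combining the pieces, on the good event one has $m^{-1}\sum_k \psi_\xi(\langle\vc{a}_k,\vc{v}\rangle) \geq Q_{2\xi}(E,\vc{a}) - 2W_m(E,\vc{a})/(\xi\sqrt{m}) - t/\sqrt{m}$ for every $\vc{v}\in E$; chaining back through the Lipschitz surrogate and multiplying by $\xi\sqrt{m}$ produces the stated inequality. The main obstacle is the symmetrization–contraction step: one must be careful that $\psi_\xi$ vanishes at the origin so that the Ledoux–Talagrand comparison applies without an additive constant, and the McDiarmid constants must be tuned so that the bounded–differences parameter $1/m$ gives exactly the exponent $2t^2$ after choosing the deviation level $u=t/\sqrt{m}$.
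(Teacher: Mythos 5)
Your proof is correct and follows essentially the same route as the cited sources (and the paper's own appendix treatment of the debiased variant): the reduction $\Vert\vc{A}\vc{v}\Vert_2\geq\frac{\xi}{\sqrt{m}}\sum_k\mathbb{1}\{|\langle\vc{a}_k,\vc{v}\rangle|\geq\xi\}$, the soft indicator $\psi_\xi$ sandwiched between the two indicator levels, McDiarmid with bounded differences $1/m$ giving the exponent $2t^2$, and Gin\'e--Zinn symmetrization followed by Ledoux--Talagrand contraction to land on $W_m(E,\vc{a})$. The constants and the handling of $\psi_\xi(0)=0$ are all in order, so nothing further is needed.
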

This theorem was used by Mendelson in \cite{Mendelson} in the context
of learning theory, to obtain sharp bounds on the performance of
empirical risk minimization. It was later adopted by the compressed
sensing community. Tropp used it in \cite{c2} to bound the minimum
conic singular value of matrices in certain recovery problems.  
It was used successfully in \cite{SG_extension} and \cite{c1} to
establish the NSP for certain classes of sensing matrices. In
\cite{Jung:frontiers}
a version for the complex setting has been established as well.

\subsubsection{Impact of the Bias}
In certain cases, this method -- directly applied without further adaptation --
may provide sub-optimal results.  For illustration, we bring here a
concrete example, sufficiently {\em biased} matrices, and look at the
number of measurements sufficient for stable recovery. We shall demonstrate for the
case $q=2$ that for a random matrix $\vc{A}\in\mathbb{R}^{m\times n}$,
whose entries are i.i.d. and distributed as
${\cal N}\left(\mu,1\right)$, the bound on the number of measurements obtained
from Theorem \ref{thm:mendelson} scales with $\mu$ in an
undesired manner.

To use Theorem \ref{thm:mendelson}, we take therefore the rows of $\vc{A}$ as
independent copies of $\vc{a}=\vc{g}+\mu\vc{1}$ where
$\vc{g}\sim \mathcal{N}\left(\vc{0},\vc{I}\right)$ is a standard
iid. Gaussian vector. We will only sketch the steps since this can be
found in several works \cite{SG_extension,c1,Jung:frontiers}.

First, one needs to bound
$Q_{2\xi}\left(T_{\rho,s}^2,\vc{a}\right)$. Using Paley-Zygmund
inequality, we show in Appendix \ref{Q and W bounds} that, for any
unit vector $\vc{z}\in\mathbb{R}^n$ and $\theta \in[0,1/2]$, we have:
\[
\mathbb{P}\left(\left|\vc{a}^{T}\vc{z}\right|\geq\theta\right)\geq\frac{\left(1-\theta^{2}\right)^{2}}{3}.
\]
Now, we will bound $W_{m}\left(T_{\rho,s}^2,\vc{a}\right)$. For this, first notice that $T_{\rho,s}^q$ contains all normalized $s$-sparse vectors, i.e
\[
T_{\rho,s}^q\supset\Sigma_s^q:=\left\{\vc{x}\in\mathbb{R}^n:\Vert\vc{x}\Vert_0\leq s,\Vert\vc{x}\Vert_q=1\right\}. 
\]
A converse result is also known (Lemma 3.2 in \cite{SG_extension}):
\[
T_{\rho,s}^q\subset\left(2+\rho^{-1}\right)\textnormal{conv}\left(\Sigma_s^q\right)\subseteq\frac{3}{\rho}\textnormal{conv}\left(\Sigma_s^q\right),
\]
from which we can have the bound
\begin{equation} \label{W_T bound}
W_{m}\left(T_{\rho,s}^q,\vc{a}\right)\leq \frac{3}{\rho}W_{m}\left(\Sigma_s^q,\vc{a}\right)\leq\frac{3s^{1/2-1/q}}{\rho}W_{m}\left(\Sigma_s^2,\vc{a}\right),
\end{equation}
where the last bound is due to the relation $\Vert \vc{x}\Vert_2\leq s^{1/2-1/q}\Vert \vc{x}\Vert_q$ for $s$-sparse vectors (Hoelder inequality).
In Appendix \ref{Q and W bounds} we then show that 
\[
W_{m}\left(\Sigma_{s}^{2},\vc{a}\right) \leq\sqrt{2s\log\left(\frac{n}{s}\right)+2s}+\mu\sqrt{2s}.
\]
One could think that this bound is too pessimistic, but Figure
\ref{fig:simulating W} shows indeed the linear dependency of
$W_{m}\left(T_{\rho,s}^{2},\vc{a}\right)$ in $\mu$, which shifts
$W_{m}\left(T_{\rho,s}^{2},\vc{a}\right)$ away from zero.
Nevertheless, summarizing the results, for $\xi=\theta/2$:
\begin{align*}
\inf_{\vc{v}\in T_{\rho,s}^2} & \Vert\vc{A}\vc{v}\Vert_{2}  \geq\xi\sqrt{m}Q_{2\xi}\left(T_{\rho,s}^2,\vc{a}\right)-\xi t-2W_{m}\left(T_{\rho,s}^2,\vc{a}\right) \\
& \geq\frac{\theta\left(1-\theta^2\right)^2\sqrt[]{m}}{6}-\frac{\theta}{2} t \\ & \hspace{3em}-\frac{6}{\rho}\left(\sqrt{2s\log\left(\frac{n}{s}\right)+2s}+\mu\sqrt{2s}\right).
\end{align*}
If we choose, for example, $\xi^2=1/8$ and $t=\sqrt{m}/24$, we have that the right hand side is positive if:
\[
m\geq\left(\frac{6\cdot 24\sqrt{2}}{\rho}\right)^22s\left(\sqrt{\log\left(\frac{en}{s}\right)}+\mu\right)^2
\]


Thus, from this result one might think that $\mu$ directly affects the
number of measurements. A main purpose of our work is to show that this is
not the case.  
\begin{figure}
\includegraphics[width=\columnwidth]{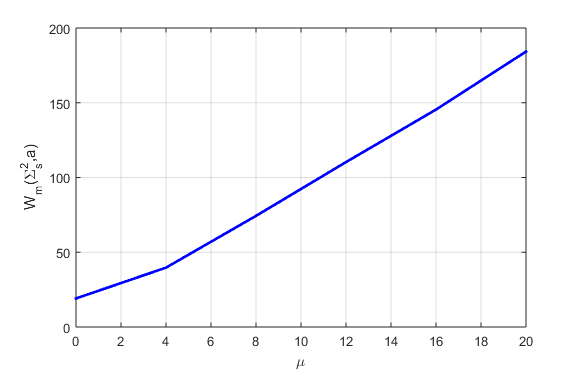}\centering
\caption{Computing numerically via simulation the width
  $W_{m}\left(\Sigma_s^{2},\vc{a}\right)$ as a
  function of $\mu$ for $s=128$ and $n=1000$.}
\label{fig:simulating W}
\end{figure}

In the next section we will therefore present a debiased
version of Theorem \ref{thm:mendelson} which indeed brings back
the known results on
the number of measurements for centered matrices, independent of
$\mu$ but at the cost of doubling $m$.

\section{Main results}
We have shown above that usual small ball method for establishing the
nullspace property suffers from a bias $\mu$ in the measurement matrices
whereby a bias will be essential for self-regularizing property (the
$\mathcal{M}^+$-criterion in \eqref{eq:mplus}) of NNLS.
Our main result, however, shows that this is not the case and NSP is
essentially independent of bias. To establish this result we present
next a debiased version of the small ball result.

\subsection{Debiased Mendelson's small balls method}
The following results parallels Theorem \ref{thm:mendelson} but
handles better a bias $\vc{e}\in\mathbb{R}^n$ (constant offset) in the
random measurement vectors (rows of the measurement matrix).
\begin{Theorem}\label{modified mendelson}
Fix a set $E\subset\mathbb{R}^n$. Let the rows of a matrix $\vc{A}\in\mathbb{R}^{m\times n}$, $\vc{a}_{1},...,\vc{a}_{m}$, be independent copies of a random vector $\vc{a}$. Define the matrix $\vc{B}\in\mathbb{R}^{\left\lfloor m/2\right\rfloor\times n}$ whose rows are $\vc{b}_i=\left(\vc{a}_{2i-1}-\vc{a}_{2i}\right)/\sqrt{2}$ and  $\vc{h}=\frac{1}{\sqrt{\lfloor m/2\rfloor}}\sum_{k=1}^{\lfloor m/2\rfloor}\epsilon_{k}\vc{b}_k$, where
$\left\{ \epsilon_{k}\right\} $ is a Rademacher sequence. Then, for $t>0$ and $\xi>0$, The bound 
\[
\inf_{\vc{v}\in E}\Vert\vc{A}\vc{v}\Vert_{2}\geq\xi\sqrt{\frac{m-1}{2}}Q_{2\xi}\left(E,\vc{b}\right)-\xi t-2W_{\lfloor m/2\rfloor}\left(E,\vc{b}\right)
\]
with $Q_{\xi}$ and $W_{m}$ as defined in Theorem \ref{thm:mendelson},
holds with probability at least $1-\text{e}^{-2t^{2}}$.
\end{Theorem}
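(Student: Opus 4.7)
The plan is to reduce the theorem to a direct application of Theorem \ref{thm:mendelson} applied to the auxiliary matrix $\vc{B}$, using a simple pointwise inequality that relates $\|\vc{A}\vc{v}\|_2$ to $\|\vc{B}\vc{v}\|_2$. First, I would exploit the elementary bound $(\alpha-\beta)^2 \leq 2(\alpha^2+\beta^2)$ applied to $\alpha = \langle \vc{a}_{2i-1}, \vc{v}\rangle$ and $\beta = \langle \vc{a}_{2i}, \vc{v}\rangle$ to get $\langle \vc{b}_i,\vc{v}\rangle^2 = \tfrac{1}{2}(\alpha-\beta)^2 \leq \alpha^2+\beta^2$. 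Summing over $i=1,\dots,\lfloor m/2\rfloor$ and discarding the possibly leftover row yields
\[
\|\vc{A}\vc{v}\|_2^2 \;\geq\; \sum_{i=1}^{\lfloor m/2\rfloor}\bigl(\langle \vc{a}_{2i-1},\vc{v}\rangle^2 + \langle \vc{a}_{2i},\vc{v}\rangle^2\bigr) \;\geq\; \|\vc{B}\vc{v}\|_2^2.
\]
This inequality is deterministic and holds uniformly in $\vc{v}$, so $\inf_{\vc{v}\in E}\|\vc{A}\vc{v}\|_2 \geq \inf_{\vc{v}\in E}\|\vc{B}\vc{v}\|_2$.

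Second, I would verify that $\vc{B}$ satisfies the hypotheses of Theorem \ref{thm:mendelson}. Since the pairs $(\vc{a}_{2i-1}, \vc{a}_{2i})$ are independent across $i$, the rows $\vc{b}_i$ are independent copies of the random vector $\vc{b} = (\vc{a}_1 - \vc{a}_2)/\sqrt{2}$. Here the ``debiasing'' is built in: if $\vc{a} = \vc{a}_0 + \vc{e}$ with $\vc{e}$ a deterministic offset, then the bias cancels in the difference and $\vc{b} = (\vc{a}_{0,1}-\vc{a}_{0,2})/\sqrt{2}$ is centered regardless of $\vc{e}$, which is precisely the gain that motivates the construction.

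Third, I would apply Theorem \ref{thm:mendelson} directly to $\vc{B}$ with parameters $t$ and $\xi$ and the same set $E$, obtaining
\[
\inf_{\vc{v}\in E}\|\vc{B}\vc{v}\|_2 \;\geq\; \xi\sqrt{\lfloor m/2\rfloor}\,Q_{2\xi}(E,\vc{b}) - \xi t - 2W_{\lfloor m/2\rfloor}(E,\vc{b})
\]
with probability at least $1-e^{-2t^2}$. The final step is the numerical observation $\lfloor m/2 \rfloor \geq (m-1)/2$, which replaces $\sqrt{\lfloor m/2\rfloor}$ by the stated $\sqrt{(m-1)/2}$ and yields exactly the advertised bound after combining with the deterministic inequality from the first paragraph.

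I do not expect any real obstacle: the content of the proof is essentially the two-line $(a-b)^2 \leq 2(a^2+b^2)$ trick together with a black-box invocation of Theorem \ref{thm:mendelson}. The only subtlety worth flagging is making sure one uses $\vc{B}$ (not $\vc{A}$) inside $Q$ and $W$, since that is where centering pays off; the price is merely halving the effective sample size, which manifests as the factor $\sqrt{(m-1)/2}$ in place of $\sqrt{m}$.
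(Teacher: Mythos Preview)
Your proposal is correct and matches the paper's own proof essentially step for step: the paper also establishes the deterministic inequality $\|\vc{A}\vc{v}\|_2^2\geq\|\vc{B}\vc{v}\|_2^2$ (phrasing it as Cauchy--Schwarz plus the reverse triangle inequality, which is exactly your $(\alpha-\beta)^2\leq 2(\alpha^2+\beta^2)$) and then black-boxes Theorem~\ref{thm:mendelson} on $\vc{B}$. The only cosmetic difference is that the paper does not spell out the step $\lfloor m/2\rfloor\geq(m-1)/2$, which you make explicit.
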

A first version, proving the results in \cite{Shadmi:isit19} and only valid for
distributions which are symmetric around $\vc{e}$, has been presented
by the authors already in the first arXiv version
\cite{Shadmi:nnls:arxiv:v1} of this work.

\begin{proof}
  The following technique is motivated by a similar debiasing step in
  \cite[Sec. IV(b)]{Chen:2015}.
  \begin{align*}
    \left\Vert\vc{Av}\right\Vert_2^2&=\sum_{i=1}^m\left|\left<\vc{a}_i,\vc{v}\right>\right|^2
                                      \geq\sum_{i=1}^{2\left\lfloor m/2\right\rfloor}\left|\left<\vc{a}_i,\vc{v}\right>\right|^2\\
                                    &=\sum_{i=1}^{\left\lfloor m/2\right\rfloor}\left(\left|\left<\vc{a}_{2i-1},\vc{v}\right>\right|^2+\left|\left<\vc{a}_{2i},\vc{v}\right>\right|^2\right)\\
                                    &\overset{(a)}{\geq}\frac{1}{2}\sum_{i=1}^{\left\lfloor m/2\right\rfloor}\left(\left|\left<\vc{a}_{2i-1},\vc{v}\right>\right|+\left|\left<\vc{a}_{2i},\vc{v}\right>\right|\right)^2\\
                                    &\overset{(b)}{\geq}\frac{1}{2}\sum_{i=1}^{\left\lfloor m/2\right\rfloor}\left(\left|\left<\vc{a}_{2i-1},\vc{v}\right>-\left<\vc{a}_{2i},\vc{v}\right>\right|\right)^2\\
                                    &=\sum_{i=1}^{\left\lfloor m/2\right\rfloor}\left|\left<\frac{1}{\sqrt{2}}\left(\vc{a}_{2i-1}-\vc{a}_{2i}\right),\vc{v}\right>\right|^2\\
                                    &=\sum_{i=1}^{\left\lfloor m/2\right\rfloor}\left|\left<\vc{b}_i,\vc{v}\right>\right|^2
                                    =\left\Vert\vc{Bv}\right\Vert_2^2,
\end{align*}
where in (a) we used Cauchy-Schwarz inequality and (b) the reverse
triangle inequality.  The last missing step is just an application of
Theorem \ref{thm:mendelson} on the matrix $\vc{B}$, and the conclusion
follows.
\end{proof}
This theorem will allow to prove the NSP of biased matrices even if
$\mu\propto\sqrt{m}$, and even for any fixed bias, it improves the
previous bounds (up to constants).

\begin{corollary}\label{NSP of SG}
  Let $\vc{A}\in\mathbb{R}^{m\times n}$ be a matrix whose rows can be written in the form $\vc{a}_i=\vc{a}_{0,i}+\mu\vc{1}$ where $\{\vc{a}_{0,i},1\leq i\leq m\}$ are
  independent copies of a random vector $\vc{a}_0$ with the following properties:
  \begin{itemize}
  \item is sub-isotropic, i.e. $\mathbb{E}\left[\left\langle\vc{a}_0,\vc{v}\right\rangle^2\right]\geq\Vert\vc{v}\Vert_2^2$ for all $\vc{v}\in\mathbb{R}^n$,
  \item 1-subgaussian, i.e., $\mathbb{E}\left[\exp\left(t\left\langle\vc{a}_0,\vc{v}\right\rangle\right)\right]\leq\exp\left(t^2\right)$ for all $\vc{v}\in\mathbb{R}^n$ with $\Vert\vc{v}\Vert_2^2\leq 1$ and $t\in\mathbb{R}$,
  \end{itemize}
  then, $\vc{A}$ has the $\ell_{q}$-NSP for all $q\geq 2$ with probability
  at least $1-\textnormal{e}^{-\left(m-1\right)/64}$ if
  \begin{equation}
    m\geq \frac{2\cdot384^2}{\rho^2}s^{2-2/q}\left(2+\hspace{0.5em}\sqrt[]{\log\left(\frac{\textnormal{e}n}{s}\right)}\right)^2+1.
\label{eq:phasetrans:debiased}
\end{equation}
\end{corollary}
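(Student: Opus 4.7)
The plan is to apply the debiased small ball method of Theorem \ref{modified mendelson} with $E = T_{\rho,s}^q$. The crucial first observation is that the deterministic bias cancels in the symmetrized rows:
\[
\vc{b}_i = \tfrac{1}{\sqrt{2}}\left(\vc{a}_{2i-1}-\vc{a}_{2i}\right) = \tfrac{1}{\sqrt{2}}\left(\vc{a}_{0,2i-1}-\vc{a}_{0,2i}\right),
\]
so the entire analysis below is automatically independent of $\mu$. Moreover $\vc{b}$ inherits the two assumed properties of $\vc{a}_0$: sub-isotropy, since $\mathbb{E}[\langle\vc{b},\vc{v}\rangle^2]=\mathbb{E}[\langle\vc{a}_0,\vc{v}\rangle^2]\geq\Vert\vc{v}\Vert_2^2$, and $1$-subgaussianity, as the sum of two independent $1/\sqrt{2}$-subgaussian vectors.

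Next I would lower bound $Q_{2\xi}(T_{\rho,s}^q,\vc{b})$ via the Paley-Zygmund inequality applied to $X^2 = \langle\vc{b},\vc{u}\rangle^2$ for unit $\vc{u}$. Sub-isotropy gives $\mathbb{E}[X^2]\geq 1$, while $1$-subgaussianity controls $\mathbb{E}[X^4]$ by a universal constant, producing an absolute lower bound on $\mathbb{P}(|\langle\vc{b},\vc{u}\rangle|\geq 2\xi)$ for $\xi$ small enough, uniformly in $\vc{u}$. In parallel I would bound $W_{\lfloor m/2\rfloor}(T_{\rho,s}^q,\vc{b})$ from above by invoking the convex-hull comparison already quoted in \eqref{W_T bound}, giving
\[
W_{\lfloor m/2\rfloor}(T_{\rho,s}^q,\vc{b}) \leq \frac{3 s^{1/2-1/q}}{\rho}\, W_{\lfloor m/2\rfloor}(\Sigma_s^2,\vc{b}).
\]
Since the Rademacher sum $\vc{h}$ is centered, sub-isotropic and $1$-subgaussian, the standard maximal inequality for the largest $s$ coordinates of such a vector yields
\[
W_{\lfloor m/2\rfloor}(\Sigma_s^2,\vc{b}) \leq \sqrt{2s}\bigl(2 + \sqrt{\log(en/s)}\bigr),
\]
which is the $1$-subgaussian analogue of the Gaussian computation carried out in Appendix \ref{Q and W bounds} (the bias term present there is now absent).

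Plugging these two estimates into Theorem \ref{modified mendelson} yields, with probability at least $1-e^{-2t^2}$,
\[
\inf_{\vc{v}\in T_{\rho,s}^q}\Vert \vc{A}\vc{v}\Vert_2 \geq \xi\sqrt{\tfrac{m-1}{2}}\,Q_{2\xi} - \xi t - \frac{6 s^{1-1/q}}{\rho}\sqrt{2}\bigl(2+\sqrt{\log(en/s)}\bigr).
\]
Choosing $\xi$ to be a fixed absolute constant for which the Paley-Zygmund bound is nontrivial, and then $t=\sqrt{(m-1)/64}$, converts the failure probability into $e^{-(m-1)/64}$ and leaves a $\sqrt{m-1}$-linear term which dominates the $W$-contribution precisely when \eqref{eq:phasetrans:debiased} holds; after rearrangement this produces condition \eqref{NSPcond} with parameters $\rho$ and some $\tau>0$, i.e. the $\ell_q$-NSP. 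The main obstacle is constant bookkeeping: tuning the Paley-Zygmund threshold for a generic $1$-subgaussian sub-isotropic vector, and matching the prefactor in the subgaussian maximal inequality for the largest $s$ coordinates, so that the final constants line up with the explicit $2\cdot 384^2/\rho^2$ in \eqref{eq:phasetrans:debiased}. No new conceptual ingredient beyond the debiasing step is required.
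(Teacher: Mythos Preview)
Your proposal follows the same route as the paper: apply Theorem~\ref{modified mendelson}, verify that the debiased rows $\vc{b}_i$ remain sub-isotropic and $1$-subgaussian, bound $Q_{2\xi}$ via Paley--Zygmund and $W$ via \eqref{W_T bound} together with a subgaussian width estimate on $\Sigma_s^2$, and then pick $\xi$ and $t$ to balance terms. The only discrepancies are cosmetic --- the paper invokes Dudley's inequality (via \cite[Theorem~8.23]{Foucart} and \cite{SG_extension}) rather than a coordinate-wise maximal inequality for the $W$-bound, and your choice $t=\sqrt{(m-1)/64}$ actually gives failure probability $e^{-(m-1)/32}$ rather than $e^{-(m-1)/64}$ (the paper takes $t=\sqrt{m-1}/(8\sqrt{2})$ and $\xi^2=1/2$).
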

\begin{proof}
We will show that Equation \eqref{NSPcond} holds by using Theorem \ref{modified mendelson} and bounding $\inf_{\vc{u}\in T^q_{\rho,s}}\left\Vert\vc{Bu}\right\Vert_2$. By the definition of $\vc{B}$, the fact that the rows of $\vc{A}$ are sub-Gaussians and Theorem 7.27 in \cite{Foucart}, we know that the rows of $\vc{B}$ are also sub-Gaussian with the same sub-Gaussian norm. In addition:
\begin{align*}
    \mathbb{E}\left[\left<\vc{b_i},\vc{v}\right>^2\right]&=\mathbb{E}\left[\left<\frac{\vc{a}_{2i-1}-\vc{a}_{2i}}{\sqrt{2}},\vc{v}\right>^2\right]\\
    &=\mathbb{E}\left[\left<\frac{\vc{a}_{0,2i-1}-\vc{a}_{0,2i}}{\sqrt{2}},\vc{v}\right>^2\right]\\
    &=\frac{1}{2}\mathbb{E}\left[\left<\vc{a}_{0,2i-1},\vc{v}\right>^2\right]+\frac{1}{2}\mathbb{E}\left[\left<\vc{a}_{0,2i},\vc{v}\right>^2\right]\\
    &\quad-\mathbb{E}\left[\left<\vc{a}_{0,2i},\vc{v}\right>\left<\vc{a}_{0,2i-1},\vc{v}\right>\right]\\
    &=\frac{1}{2}\mathbb{E}\left[\left<\vc{a}_{0,2i-1},\vc{v}\right>^2\right]+\frac{1}{2}\mathbb{E}\left[\left<\vc{a}_{0,2i},\vc{v}\right>^2\right]\\
    &\geq\left\Vert\vc{v}\right\Vert_2^2,
\end{align*}
so the rows of $\vc{B}$ are sub-isotropic as well. Thus, we can use results from the proof of Corollary 5.2 from \cite{SG_extension}. There, they first use the sub-isotropic and sub-Gaussian properties to bound $Q_\xi\left(T_{\rho,s}^q,\vc{b}\right)$. Notice that for $q\geq 2$ and some $\vc{u}\in T_{\rho,s}^q$ we have $1=\Vert\vc{u}\Vert_q\leq\Vert\vc{u}\Vert_2$. Therefore:
\begin{align*}
\mathbb{P}\left(\left|\left\langle\vc{b},\vc{u}\right\rangle\right|>\xi\right) & =\mathbb{P}\left(\left|\left\langle\vc{b},\frac{\vc{u}}{\Vert\vc{u}\Vert_2}\right\rangle\right|>\frac{\xi}{\Vert\vc{u}\Vert_2}\right) \\
& \geq\mathbb{P}\left(\left|\left\langle\vc{b},\frac{\vc{u}}{\Vert\vc{u}\Vert_2}\right\rangle\right|>\xi\right).
\end{align*}
With Paley-Zygmund inequality we then have \cite[proof of Corollary 5.2]{SG_extension}:
\[
\mathbb{P}\left(\left|\left\langle\vc{b},\frac{\vc{u}}{\Vert\vc{u}\Vert_2}\right\rangle\right|>\xi\right)\geq\left(1-\xi^2\right)^2
\]
for $0\leq\xi\leq 1$.
Recall equation (\ref{W_T bound}), and combine it with the following bound:
\[
W_{\lfloor m/2\rfloor}\left(\Sigma_s^2,\vc{b}\right)\leq 4\sqrt{2}\left(2\hspace{0.5em}\sqrt[]{s}+\hspace{0.5em}\sqrt[]{s\log\left(\frac{\textnormal{e}n}{s}\right)}\right).
\]
The last bound is due to Dudley's inequality, taken from \cite[Theorem 8.23]{Foucart}, and, again, from the proof of Corollary 5.2 from \cite{SG_extension}. 
Now, by choosing for example $t=\sqrt{m-1}/\left(8\sqrt{2}\right)$ and $\xi^2=1/2$, we have with probability
  at least $1-\textnormal{e}^{-\left(m-1\right)/64}$ (Theorem \ref{modified mendelson}):
\begin{align*}
    \inf_{\vc{v}\in T^q_{\rho,s}}\left\Vert\vc{Av}\right\Vert_2&\geq\xi\sqrt{\frac{m-1}{2}}Q_{2\xi}\left(T^q_{\rho,s},\vc{b}\right)\\
    &\hspace{20pt}-\xi t-2W_{\left\lfloor m/2\right\rfloor}\left(T^q_{\rho,s},\vc{b}\right)\\
    &\geq\xi\sqrt{\frac{m-1}{2}}\left(1-\xi^2\right)^2-\xi t\\
    &\quad-\frac{3s^{1/2-1/q}}{\rho}8\sqrt{2}\left(2\hspace{0.5em}\sqrt[]{s}+\hspace{0.5em}\sqrt[]{s\log\left(\frac{\textnormal{e}n}{s}\right)}\right)\\
    &=\frac{\sqrt{m-1}}{16}-\frac{3s^{1-1/q}}{\rho}8\sqrt{2}\left(2+\hspace{0.5em}\sqrt[]{\log\left(\frac{\textnormal{e}n}{s}\right)}\right).
\end{align*}
If the bound is positive, we can set it to $1/\tau$ and we have the NSP. This bound is positive if
\begin{align*}
    m\geq \frac{2\cdot384^2}{\rho^2}s^{2-2/q}\left(2+\hspace{0.5em}\sqrt[]{\log\left(\frac{\textnormal{e}n}{s}\right)}\right)^2+1.
\end{align*}
\end{proof}

Notice that this bound is independent of the bias now.

\subsection{Establishing the $\cal{M}^+$ criterion}

\begin{Theorem}\label{M+ for SG}
Let $\vc{A}\in\mathbb{R}^{m\times n}$ be a random matrix with independent rows. Assume that the columns of $\vc{A}$ can be written as 
$\vc{a}^{col}_i=\vc{a}^{col}_{0,i}+\mu\vc{1}$ for $1\leq i\leq n$, where $\vc{a}^{col}_{0,i}$ are zero-mean
1-subgaussian random vectors (with independent components), i.e., satisfying
$\mathbb{E}\left[\exp\left(t\left\langle\vc{a}^{col}_{0,i},\vc{v}\right\rangle\right)\right]\leq\exp\left(t^2\right)$
 for all $\vc{v}\in\mathbb{R}^m$ with $\Vert\vc{v}\Vert_2^2\leq 1$ and $t\in\mathbb{R}$ for all $ 1\leq i\leq n$. Then $\vc{A}\in\cal{M}^+$ with probability at least 
$1-2n\exp\left(-\frac{\mu^2m}{16}\right)$.
\end{Theorem}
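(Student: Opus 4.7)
The plan is to exhibit an explicit witness vector $\vc{t}$ for the $\mathcal{M}^+$-criterion and then use the subgaussian assumption together with a union bound over columns to show that $\vc{A}^T\vc{t}>\vc{0}$ holds entrywise with the claimed probability.

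The natural choice is the constant vector $\vc{t}=\vc{1}_m\in\mathbb{R}^m$. Writing out one coordinate, for $1\le i\le n$ we get
\[
(\vc{A}^T\vc{t})_i=\bigl\langle \vc{a}^{col}_i,\vc{1}_m\bigr\rangle=\bigl\langle\vc{a}^{col}_{0,i},\vc{1}_m\bigr\rangle+m\mu,
\]
so positivity at coordinate $i$ is equivalent to $\langle\vc{a}^{col}_{0,i},\vc{1}_m\rangle>-m\mu$, i.e.\ to a lower tail event for a centered subgaussian scalar. Rescaling by $\sqrt{m}$, let $X_i:=\langle\vc{a}^{col}_{0,i},\vc{1}_m/\sqrt{m}\rangle$. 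Since $\vc{1}_m/\sqrt{m}$ is a unit vector and $\vc{a}^{col}_{0,i}$ is $1$-subgaussian in the sense given in the statement, the moment generating function of $X_i$ obeys $\mathbb{E}[\exp(tX_i)]\le\exp(t^2)$ for all $t\in\mathbb{R}$. This is exactly what is needed for a Chernoff-style tail bound.

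Next I would apply the standard MGF-based Chernoff argument: $\mathbb{P}(X_i<-\mu\sqrt{m})=\mathbb{P}(-X_i>\mu\sqrt{m})\le\inf_{t>0}\exp(t^2-t\mu\sqrt{m})$, which after minimization at $t=\mu\sqrt{m}/2$ gives a bound of the form $\exp(-c\mu^2 m)$ for a numerical constant $c$. Tracking the constant for the two-sided version suffices to get at worst $2\exp(-\mu^2m/16)$ as in the statement (the factor of $2$ absorbs the one-sided-versus-two-sided slack and any mismatch between the MGF normalization $\exp(t^2)$ and the optimized exponent; one can simply bound $\mathbb{P}(X_i<-\mu\sqrt{m})\le\exp(-\mu^2m/4)$ and relax to $\exp(-\mu^2 m/16)$ if desired). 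A union bound over the $n$ columns then yields
\[
\mathbb{P}\bigl(\exists\, i:\,(\vc{A}^T\vc{1}_m)_i\le 0\bigr)\le 2n\exp\!\left(-\frac{\mu^2 m}{16}\right),
\]
and on the complementary event $\vc{A}^T\vc{1}_m>\vc{0}$, certifying $\vc{A}\in\mathcal{M}^+$.

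There is no real obstacle here: the argument is essentially a Hoeffding/subgaussian concentration together with a union bound, and the $1$-subgaussian hypothesis on each column $\vc{a}^{col}_{0,i}$ is tailored precisely to control $\langle\vc{a}^{col}_{0,i},\vc{1}_m/\sqrt{m}\rangle$. The only mildly delicate point is keeping the constants aligned with the statement; that is handled by choosing $\vc{t}=\vc{1}_m$ and applying the MGF bound with the unit vector $\vc{1}_m/\sqrt{m}$ before relaxing constants to match the $1/16$ in the exponent. Note that independence across columns is not needed for the union bound, only independence within each column (used implicitly through the $1$-subgaussian assumption on the full vector $\vc{a}^{col}_{0,i}$); the hypothesis of independent rows is also not exploited in this part of the argument.
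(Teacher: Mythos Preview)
Your proposal is correct and matches the paper's proof essentially verbatim: the paper also takes $\vc{t}$ proportional to $\vc{1}_m$ (specifically $\vc{t}=\frac{1}{m\mu}\vc{1}$), applies the subgaussian Hoeffding-type bound to each column sum, and finishes with a union bound over the $n$ columns. The only minor difference is that the paper uses the two-sided bound $\mathbb{P}(|w_i-1|\ge 1/2)\le 2\exp(-\mu^2 m/16)$, which additionally yields $\kappa(\vc{A})\le 3$---a fact used downstream in the main theorem.
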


\begin{proof}
We choose the vector $\vc{t}$ to be 
$\vc{t}=\frac{1}{m\mu}\vc{1}$ and have $\vc{w}=\vc{A}^T\vc{t}$. Compute now
\begin{align*}
\left|\text{w}_i-1\right| &=
\left|(\vc{A}^T\vc{t})_i-1\right| \\ &=\left|\frac{1}{m\mu}\sum_{k=1}^mA_{k,i}-1\right| \\
&= \left|\frac{1}{m\mu}\sum_{k=1}^m\left( A_{k,i}-\mu\right)\right| \\
&= \left|\frac{1}{m\mu}\sum_{k=1}^m\left( A_{k,i}-\mathbb{E}\left[A_{k,i}\right]\right)\right|.
\end{align*}
We can now use Hoeffding's inequality (Theorem 7.27 from \cite{Foucart}) to bound this term by:
\[
\mathbb{P}\left(\left|\frac{1}{m\mu}\sum_{k=1}^m\left(A_{k,i}-\mathbb{E}\left[A_{k,i}\right]\right)\right|\geq\frac{1}{2}\right)\leq2\exp\left(-\frac{\mu^2m}{16}\right).
\]
This guarantees that $\text{w}_i$ is positive with probability better than the one written above, and by applying the union bound we have that all the components of $\vc{w}$ are positive with probability at least 
$1-2n\exp\left(-\frac{\mu^2m}{16}\right)$.
\end{proof}
Thus, for $\mu\rightarrow\infty$ the probability
$\mathbb{P}\left(\vc{A}\in\cal{M}^+\right)$ converges to 1.
This also proves that $\kappa\left(\vc{A}\right)\leq 3$ with probability as in Theorem \ref{M+ for SG}.

\subsection{Main theorem}\label{sec:main}
Now we combine here the results for NSP and $M^+$ criterion.
\begin{Theorem}\label{thm:main}
 Set $\rho\in\left(0,1/3\right)$, $q\geq 2$ and
  $p\in\left[1,q\right]$. Let $\vc{A}\in\mathbb{R}^{m\times n}$ be a
  random matrix whose rows can be written as $\vc{a}_i=\vc{a}_{0,i}+\mu\vc{1}^T$, where $\{\vc{a}_{0,i},1\leq i\leq m\}$ are independent copies of a random vector $\vc{a}_0$ which is:
\begin{itemize}
\item sub-isotropic, i.e., $\mathbb{E}\left[\left\langle\vc{a}_0,\vc{v}\right\rangle^2\right]\geq\Vert\vc{v}\Vert_2^2$ for all $\vc{v}\in\mathbb{R}^n$,
\item 1-subgaussian, i.e., $\mathbb{E}\left[\exp\left(t\left\langle\vc{a}_0,\vc{v}\right\rangle\right)\right]\leq\exp\left(t^2\right)$ for all $\vc{v}\in\mathbb{R}^n$ with $\Vert\vc{v}\Vert_2^2\leq 1$ and $t\in\mathbb{R}$,
\end{itemize} 
and whose columns can be written for all $i$ as
$\vc{a}^{col}_i=\vc{a}^{col}_{0,i}+\mu\vc{1}$ with $\vc{a}^{col}_{0,i}$ also being 1-subgaussian.
If
\[
m\geq \frac{2\cdot384^2}{\rho^2}s^{2-2/q}\left(2+\hspace{0.5em}\sqrt[]{\log\left(\frac{\textnormal{e}n}{s}\right)}\right)^2+1,
\]
the following holds with probability at least
$1-\textnormal{e}^{-\left(m-1\right)/64}-2n\exp\left(-\frac{\mu^2m}{16}\right)$:
For all $\vc{x}_0$ and $\vc{n}$, 
solution $\hat{\vc{x}}$ of the NNLS \eqref{eq:nnls} for $\vc{y}=\vc{A}\vc{x}_0+\vc{n}$ satisfies the following
error bound:
\[
\Vert\hat{\vc{x}}-\vc{x}_0\Vert_{p}\leq\frac{C}{s^{1-1/p}}\sigma_{s}\left(\vc{x}\right)_{1}+\frac{D}{s^{\frac{1}{q}-\frac{1}{p}}}\left(\frac{1}{\sqrt{m}\mu}+\tau\right)\Vert\vc{n}\Vert_2,
\]
with constants $C, D$ and $\tau$ defined as in Theorem \ref{thm:NNLS bound}.  
\end{Theorem}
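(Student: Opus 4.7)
The plan is to combine three ingredients already established in this paper: Corollary~\ref{NSP of SG} (NSP for biased subgaussian matrices), Theorem~\ref{M+ for SG} (the $\mathcal{M}^+$ criterion with an explicit dual vector $\vc{t}$), and the deterministic Theorem~\ref{thm:NNLS bound} (NNLS error bound given NSP and $\mathcal{M}^+$). The argument is then a union bound together with a careful substitution of constants.

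First, the hypotheses on the rows of $\vc{A}$ (sub-isotropic, $1$-subgaussian, shifted by $\mu\vc{1}$) are exactly those of Corollary~\ref{NSP of SG}, so whenever $m$ satisfies \eqref{eq:phasetrans:debiased}, $\vc{A}$ has the $\ell_q$-NSP with parameter $\rho$ and some $\tau>0$ with probability at least $1 - e^{-(m-1)/64}$; the value of $\tau$ is just the reciprocal of the positive lower bound on $\inf_{\vc{v} \in T_{\rho,s}^q}\|\vc{A}\vc{v}\|_2$ derived inside that corollary's proof.

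Second, the column hypothesis allows Theorem~\ref{M+ for SG} to be applied with the explicit choice $\vc{t} = \tfrac{1}{m\mu}\vc{1}_m$. The proof there in fact shows not merely that $\vc{A}^T\vc{t} > \vc{0}$, but that every entry $w_i = (\vc{A}^T\vc{t})_i$ lies in $[1/2, 3/2]$ with probability at least $1 - 2n\exp(-\mu^2 m/16)$. Setting $\vc{W} = \mathrm{diag}(\vc{A}^T\vc{t})$, this gives $\|\vc{W}\| \leq 3/2$ and $\|\vc{W}^{-1}\| \leq 2$, hence $\kappa(\vc{A}) \leq \|\vc{W}\|\|\vc{W}^{-1}\| \leq 3$, and in particular $\max\{\kappa,\|\vc{W}^{-1}\|\}\leq 3$ which feeds into the constant $D$ of Theorem~\ref{thm:NNLS bound}. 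On the same event one also has $\|\vc{t}\|_2 = \sqrt{m}/(m\mu) = 1/(\sqrt{m}\mu)$, which is the quantity that eventually appears in the stated error bound.

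Finally, by a union bound both events hold simultaneously with probability at least $1 - e^{-(m-1)/64} - 2n\exp(-\mu^2 m/16)$. The restriction $\rho \in (0, 1/3)$ combined with $\kappa \leq 3$ yields $\kappa\rho < 1$, which is the last remaining hypothesis of Theorem~\ref{thm:NNLS bound}. Plugging $\kappa \leq 3$, $\|\vc{t}\|_2 = 1/(\sqrt{m}\mu)$, and the NSP constants $\rho, \tau$ into the conclusion of Theorem~\ref{thm:NNLS bound} produces exactly the claimed error bound, with the same $C$ and $D$ as there. No serious obstacle arises; the only care needed is to verify that the row-level hypotheses driving the NSP proof and the column-level hypotheses driving the $\mathcal{M}^+$ proof are simultaneously met (which is assumed) and to track the explicit constant $\kappa \leq 3$ associated with this particular $\vc{t}$, which is what makes the threshold $\rho < 1/3$ sufficient.
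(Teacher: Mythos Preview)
Your proposal is correct and follows essentially the same route as the paper: the paper's proof is exactly a union bound combining Corollary~\ref{NSP of SG} and Theorem~\ref{M+ for SG}, then invoking Theorem~\ref{thm:NNLS bound} with the specific $\vc{t}=\frac{1}{m\mu}\vc{1}$ (so $\|\vc{t}\|_2=1/(\sqrt{m}\mu)$) and the bound $\kappa\leq 3$ to ensure $\kappa\rho<1$. Your write-up is in fact more explicit than the paper's about the intermediate constants ($\|\vc{W}\|\leq 3/2$, $\|\vc{W}^{-1}\|\leq 2$), but the argument is the same.
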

\begin{proof}
This is just an application of the union bound to bound the probability of the intersection of the events that Corollary \ref{NSP of SG} and Theorem \ref{M+ for SG} hold together, since the conditions of both are assumed to be satisfied. Since both the $\cal{M}^+$ criterion and the $\ell_2$-NSP hold, NNLS reconstructs the original vector with a reconstruction guarantee according to Theorem \ref{thm:NNLS bound}. Notice that we used our choice of $\vc{t}$ from the proof of Theorem \ref{M+ for SG} to compute $\Vert\vc{t}\Vert_2$, and the condition $\kappa\rho<1$ is satisfied with high probability because $\kappa<3$ (proof of Theorem \ref{M+ for SG}).
\end{proof}

\section{NUMERICAL EXPERIMENTS}
In this section we provide numerical experiments to support the
results of the previous sections. We measured the recovery performance
in terms of the normalized square roof of the MSE, given by:
\[
\text{NMSE}=\frac{\Vert\hat{\vc{x}}-\vc{x}_0\Vert_2}{\|\vc{x}_0\|_2}
\]
for four algorithms: NNLS \eqref{eq:nnls} with biased and with centered sensing
matrix, BPDN \eqref{eq:bpdn} with biased and with centered sensing matrix.  For NNLS
we have used either the internal {\sc Matlab} routine ``lsqnonneg''
which is based on the active-set algorithm \cite{Lawson74} or a
speed-optimized version ``bwhiten'' \footnote{ B. Whiten, ``nnls -
  Alternative to lsqnonneg'',
  \url{https://de.mathworks.com/matlabcentral/fileexchange/38003-nnls-non-negative-least-squares}},
and BPDN has been solved using the {\sc Cvx}-toolbox.

For fixed $n=100$ and $s=5$, we take
$m\in\left\{20,25,...,55,60,70,80 \right\}$, and randomly generate
Gaussian matrices with i.i.d. entries $\mathcal{N}(\mu,1)$ for
$\mu\in\left\{0,20\right\}$. The non-negative signals were either binary vectors
$\vc{x}_0\in\{0,1\}^n$ or absolute value of
a standard normal random variables. We add Gaussian white noise $\vc{n}$
with
zero mean and variance $\sigma^2=-20$dB to the measurements, and reconstruct the
signal, either by NNLS \eqref{eq:nnls} or by BPDN \eqref{eq:bpdn}
using the instantaneous norm $\eta=\|\vc{n}\|_2$. Note that this
already reflects some instantaneous extra knowledge for BPDN.
\begin{figure}
\includegraphics[width=\columnwidth]{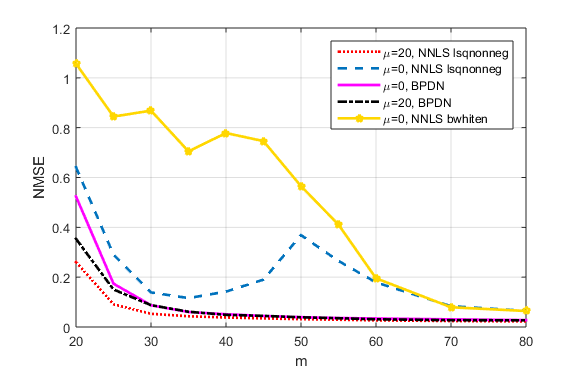}\centering
\caption{NMSE of NNLS in reconstruction of non-negative sparse binary signals, with Gaussian sensing matrices.\label{fig:NMSEbinary}}
\includegraphics[width=\columnwidth]{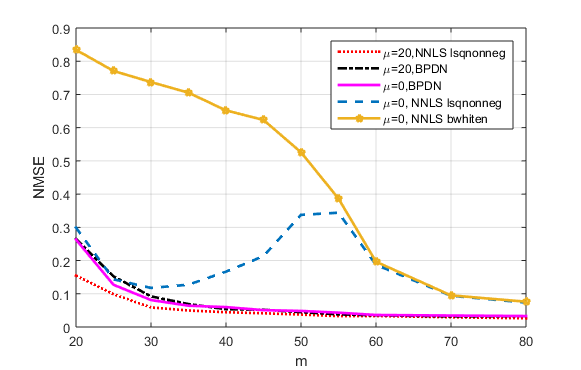}\centering
\caption{NMSE of NNLS in reconstruction of non-negative sparse one-sided-Gaussian signals, with Gaussian sensing matrices.\label{fig:NMSEabsgaussian}}
\end{figure}

The results are given in Figures \ref{fig:NMSEbinary} and \ref{fig:NMSEabsgaussian}. First, we see that the
bias is critical for reconstruction with NNLS, since non-biased
matrices are likely not to belong to $\cal{M}^+$. Therefore, NNLS with centered
matrices performed worse than the other three algorithms. It is known
that in general NNLS makes sense for this case
only when $m\geq n/2$ (see for example comments in \cite{PSDmatrices}).
For $m\leq n/2$ there is with high probability
no unique solution in the noiseless case, i.e., the NNLS performance
is determined by the algorithm implementation.
In contrast, NNLS reconstruction of a non-negative
signal from biased measurements, where the sensing matrix both
satisfies the NSP and belong to $\cal{M}^+$, achieved the best
performance among the four.

Another thing worth noticing, is  the equivalence between biased and
centered matrices when using BPDN, when $m$ is above some
threshold. This can be expected from Corollary \ref{NSP of SG}, since, by this result, the bias plays no role for the NSP.

\section{Conclusions}

We obtained recovery guarantees for NNLS in the case of biased subgaussian matrices for non-negative sparse vectors. For that purpose, we needed to show that these matrices satisfy the NSP. For this, we first used Mendelson's small ball method, and saw that the bias affects the bound in a negative way. We showed that the NSP of the biased matrix is implied by the NSP of a related centered matrix. This allows to ignore the bias and to find better bounds on the class of biased matrices, even when the bias is much bigger than the variance.



\section*{Acknowledgements}
We thank Dominik Stoeger and Saeid Haghighatshoar for fruitful discussions.
The work was funded by the Alexander-von-Humboldt
foundation and partially supported by DAAD grant 57417688.
PJ has been supported by DFG grant JU 2795/3.


\section*{APPENDIX}
\if0
\subsection{Proof of Theorem \ref{modified mendelson}} \label{App.A}

\begin{Theorem}\label{modified mendelson old}
Fix a set $E\subset\mathbb{R}^n$. Let the rows of a matrix $\vc{A}\in\mathbb{R}^{m\times n}$, $\vc{a}_{1},...,\vc{a}_{m}$, be independent copies
of a random vector $\vc{a}\in\mathbb{R}^n$, which can be written as $\vc{a}=\vc{a}_0+\vc{e}$
such that $\vc{a}_0$ has a symmetric distribution. Define $\vc{h}=\frac{1}{\sqrt{m}}\sum_{k=1}^{m}\epsilon_{k}\vc{a}_{0,k}$, where
$\left\{ \epsilon_{k}\right\} $ is a Rademacher sequence. Then, for $t>0$ and $\xi>0$, The bound 
\[
\inf_{\vc{v}\in E}\Vert\vc{A}\vc{v}\Vert_{2}\geq\frac{1}{2}\xi\sqrt{m}Q_{2\xi}\left(E,\vc{a}_0\right)-\frac{3}{2}\xi t-2W_{m}\left(E,\vc{a}_0\right)
\]
with $Q_{\xi}\left(E,\vc{a}_0\right)$ and $W_{m}\left(E,\vc{a}_0\right)$ as defined in Theorem \ref{thm:mendelson},

holds with probability at least $1-2\text{e}^{-2t^{2}}$.
\end{Theorem}

We follow the main steps of Mendelson's, described in \cite{c2}. First,
using Jensen's inequality:

\[
\left(\frac{1}{m}\sum_{i=1}^{m}\left|\left\langle \vc{a}_{i},\vc{u}\right\rangle \right|^{2}\right)^{\frac{1}{2}}\geq\frac{\xi}{m}\sum_{i=1}^{m}\mathbb{I}\left\{ \left|\left\langle \vc{a}_{i},\vc{u}\right\rangle \right|\geq\xi\right\} 
\]
Now we define a Rademacher sequence $\left\{ \epsilon_{i}\right\} _{i=1}^{m}$,
and define $\tilde{\vc{a}}_{0,i}=\epsilon_{i}\vc{a}_{0,i}$.
Notice that, because of the symmetry of the distribution of $\vc{a}_0$,
the random variables $\tilde{\vc{a}}_{0,i}$ and $\epsilon_{i}$
are independent since 
\[
f_{\tilde{\vc{a}}_{0,i}|\epsilon_{i}}\left(\tilde{\vc{a}}_{0,i}|\epsilon_{i}=\pm1\right)=f_{\tilde{\vc{a}}_{0,i}|\epsilon_{i}}\left(\pm\vc{a}_{0,i}|\epsilon_{i}=\pm1\right)=
\]
\[
=f_{\vc{a}_{0,i}|\epsilon_{i}}\left(\vc{a}_{0,i}|\epsilon_{i}=\pm1\right)=f_{\vc{a}_{0,i}}\left(\vc{a}_{0,i}\right)=f_{\tilde{\vc{a}}_{0,i}}\left(\tilde{\vc{a}}_{0,i}\right)
\]
and therefore
\[
f_{\tilde{\vc{a}}_{0,i},\epsilon_{i}}\left(\tilde{\vc{a}}_{0,i},\epsilon_{i}=\pm1\right)=\mathbb{P}\left(\epsilon_{i}=\pm1\right)f_{\tilde{\vc{a}}_{0,i}|\epsilon_{i}}\left(\tilde{\vc{a}}_{0,i}|\epsilon_{i}=\pm1\right)
\]
\[
=\mathbb{P}\left(\epsilon_{i}=\pm1\right)f_{\tilde{\vc{a}}_{0,i}}\left(\tilde{\vc{a}}_{0,i}\right).
\]
Now,

\begin{align*}
& \frac{\xi}{m}\sum_{i=1}^{m}\mathbb{I}\left\{ \left|\left\langle \vc{a}_{i},\vc{u}\right\rangle \right|\geq\xi\right\} \\
& = \frac{\xi}{m}\sum_{i=1}^{m}\mathbb{I}\left\{ \left|\left\langle \vc{a}_{0,i},\vc{u}\right\rangle +\left\langle \vc{e}_i,\vc{u}\right\rangle \right|\geq\xi\right\} \\
& = \frac{\xi}{m}\sum_{i=1}^{m}\mathbb{I}\left\{ \left|\epsilon_{i}\left\langle \vc{a}_{0,i},\vc{u}\right\rangle +\epsilon_{i}\left\langle \vc{e}_i,\vc{u}\right\rangle \right|\geq\xi\right\} \\
& = \frac{\xi}{m}\sum_{i=1}^{m}\mathbb{I}\left\{ \left|\left\langle \tilde{\vc{a}}_{0,i},\vc{u}\right\rangle +\epsilon_{i}\left\langle \vc{e}_i,\vc{u}\right\rangle \right|\geq\xi\right\} \\
& \geq\frac{\xi}{m}\sum_{i=1}^{m}\mathbb{I} \Big\{  \left|\left\langle\tilde{\vc{a}}_{0,i},\vc{u}\right\rangle\right|+\left|\epsilon_{i}\left\langle \vc{e}_i,\vc{u}\right\rangle\right| \geq\xi, \\ & \hspace{8em}\textnormal{sign}\left(\left\langle\tilde{\vc{a}}_{0,i},\vc{u}\right\rangle\right)=\textnormal{sign}\left(\epsilon_{i} \left\langle\vc{e}_i,\vc{u}\right\rangle\right)\Big\}  \\
& =\frac{\xi}{m}\sum_{i=1}^{m}\mathbb{I}\Big\{ \left|\left\langle \tilde{\vc{a}}_{0,i},\vc{u}\right\rangle \right|+\left|\left\langle \vc{e}_i,\vc{u}\right\rangle \right|\geq\xi, \\ 
& \hspace{8em}\textnormal{sign}\left(\left\langle \tilde{\vc{a}}_{0,i},\vc{u}\right\rangle \right)=\textnormal{sign}\left(\epsilon_{i}\left\langle \vc{e}_i,\vc{u}\right\rangle \right)\Big\} 
\end{align*}

Next we use McDiarmid's bounded difference concentration Theorem (\cite{c3},
6.1, page 170). For this we define 
\begin{multline*}
g\left(\epsilon_{1},...,\epsilon_{m}\right)=\sum_{i=1}^{m}\mathbb{I}\big\{ \left|\left\langle \tilde{\vc{a}}_{0,i},\vc{u}\right\rangle \right|+\left|\left\langle \vc{e}_i,\vc{u}\right\rangle \right|\geq\xi, \\ \textnormal{sign}\left(\left\langle \tilde{\vc{a}}_{0,i},\vc{u}\right\rangle \right)=\textnormal{sign}\left(\epsilon_{i}\left\langle \vc{e}_i,\vc{u}\right\rangle \right)\big\} 
\end{multline*}
and
\[
\sup_{\left\{ \epsilon_{i}\right\} _{1}^{m}}\left|g\left(\epsilon_{1},...,\epsilon_{i},...,\epsilon_{m}\right)-g\left(\epsilon_{1},...,\epsilon_{i}^{'},...,\epsilon_{m}\right)\right|\leq1
\]
and therefore 
\[
g\left(\epsilon_{1},...,\epsilon_{m}\right)\geq\mathbb{E}_{\epsilon}\left[g\left(\epsilon_{1},...,\epsilon_{m}\right)\right]-t\sqrt{m}\,\hspace{1em}\textnormal{w.p.}\,\geq1-e^{-2t^{2}}
\]
hence 
\begin{align*}
& \frac{\xi}{m}\sum_{i=1}^{m}\mathbb{I}\big\{ \left|\left\langle \tilde{\vc{a}}_{0,i},\vc{u}\right\rangle \right|+\left|\left\langle \vc{e}_i,\vc{u}\right\rangle \right|\geq\xi, \\
& \hspace{8em}\textnormal{sign}\left(\left\langle \tilde{\vc{a}}_{0,i},\vc{u}\right\rangle \right)=\textnormal{sign}\left(\epsilon_{i}\left\langle \vc{e}_i,\vc{u}\right\rangle \right)\big\} \\
& \geq\frac{\xi}{m}\sum_{i=1}^{m}\mathbb{E}_{\epsilon}\big[\mathbb{I}\big\{ \left|\left\langle \tilde{\vc{a}}_{0,i},\vc{u}\right\rangle \right|+\left|\left\langle \vc{e}_i,\vc{u}\right\rangle \right|\geq\xi, \\
& \hspace{5em}\textnormal{sign}\left(\left\langle \tilde{\vc{a}}_{0,i},\vc{u}\right\rangle \right)=\textnormal{sign}\left(\epsilon_{i}\left\langle \vc{e}_i,\vc{u}\right\rangle \right)\big\}\big] -\frac{\xi t}{\sqrt{m}} \\
& =\frac{\xi}{m}\sum_{i=1}^{m}\mathbb{E}_{\epsilon}\big[\mathbb{I}\big\{ \left|\left\langle \tilde{\vc{a}}_{0,i},\vc{u}\right\rangle \right|+\left|\left\langle \vc{e}_i,\vc{u}\right\rangle \right|\geq\xi\big\} \cdot \\
& \hspace{5em}\mathbb{I}\big\{ \textnormal{sign}\left(\left\langle \tilde{\vc{a}}_{0,i},\vc{u}\right\rangle \right)=\textnormal{sign}\left(\epsilon_{i}\left\langle \vc{e}_i,\vc{u}\right\rangle \right)\big\} \big]-\frac{\xi t}{\sqrt{m}} \\
& =\frac{\xi}{m}\sum_{i=1}^{m}\mathbb{I}\left\{ \left|\left\langle \tilde{\vc{a}}_{0,i},\vc{u}\right\rangle \right|+\left|\left\langle \vc{e}_i,\vc{u}\right\rangle \right|\geq\xi\right\}\cdot \\
& \hspace{5em}\mathbb{E}_{\epsilon}\left[\mathbb{I}\left\{ \textnormal{sign}\left(\left\langle \tilde{\vc{a}}_{0,i},\vc{u}\right\rangle \right)=\textnormal{sign}\left(\epsilon_{i}\left\langle \vc{e}_i,\vc{u}\right\rangle \right)\right\} \right]-\frac{\xi t}{\sqrt{m}} \\
& =\frac{\xi}{m}\sum_{i=1}^{m}\frac{1}{2}\mathbb{I}\left\{ \left|\left\langle \tilde{\vc{a}}_{0,i},\vc{u}\right\rangle \right|+\left|\left\langle \vc{e}_i,\vc{u}\right\rangle \right|\geq\xi\right\} -\frac{\xi t}{\sqrt{m}}.
\end{align*}
Notice that since $\left|\left\langle \vc{e}_i,\vc{u}\right\rangle \right|\geq0$
we have 
\[
\mathbb{I}\left\{ \left|\left\langle \tilde{\vc{a}}_{0,i},\vc{u}\right\rangle \right|+\left|\left\langle \vc{e}_i,\vc{u}\right\rangle \right|\geq\xi\right\} \geq\mathbb{I}\left\{ \left|\left\langle \tilde{\vc{a}}_{0,i},\vc{u}\right\rangle \right|\geq\xi\right\} 
\]
\begin{align*}
& \Longrightarrow\frac{\xi}{m}\sum_{i=1}^{m}\frac{1}{2}\mathbb{I}\left\{ \left|\left\langle \tilde{\vc{a}}_{0,i},\vc{u}\right\rangle \right|+\left|\left\langle \vc{e}_i,\vc{u}\right\rangle \right|\geq\xi\right\} -\frac{\xi t}{\sqrt{m}} \\ 
& \hspace{8em}\geq\frac{\xi}{m}\sum_{i=1}^{m}\frac{1}{2}\mathbb{I}\left\{ \left|\left\langle \tilde{\vc{a}}_{0,i},\vc{u}\right\rangle \right|\geq\xi\right\} -\frac{\xi t}{\sqrt{m}}.
\end{align*}
Now we can apply the other steps in Mendelson's proof verbatim,
like in \cite{c2} (5.6) on page 12, with $Q_{\xi}\left(\vc{u}\right)=\mathbb{P}\left(\left|\left\langle \tilde{\vc{a}}_{0,i},\vc{u}\right\rangle \right|\geq\xi\right)$
\begin{multline*}
=\frac{\xi}{2}\inf_{\vc{u}}Q_{2\xi}\left(\vc{u}\right)-\frac{\xi}{2m}\sup_{\vc{u}}\sum_{i=1}^{m}\big[Q_{2\xi}\left(\vc{u}\right) \\ -\mathbb{I}\left\{ \left|\left\langle \tilde{\vc{a}}_{0,i},\vc{u}\right\rangle \right|\geq\xi\right\} \big]-\frac{\xi t}{\sqrt{m}}
\end{multline*}

To use again McDiarmid bounded differences inequality (\cite{c3}, 6.1,
page 170) we define:
\begin{multline*}
f\left(\tilde{\vc{a}}_{0,1},...,\tilde{\vc{a}}_{0,i},...,\tilde{\vc{a}}_{0,m}\right)= \\ \sup_{\vc{u}}\sum_{i=1}^{m}\left[Q_{2\xi}\left(\vc{u}\right)-\mathbb{I}\left\{ \left|\left\langle \tilde{\vc{a}}_{0,i},\vc{u}\right\rangle \right|\geq\xi\right\} \right]
\end{multline*}
and notice that 
\begin{align*}
& f\left(\tilde{\vc{a}}_{0,1},...,\tilde{\vc{a}}_{0,j},...,\tilde{\vc{a}}_{0,m}\right)-f\left(\tilde{\vc{a}}_{0,1},...,\tilde{\vc{a}}_{0,j}^{'},...,\tilde{\vc{a}}_{0,m}\right) \\
& =\sup_{\vc{u}}\bigg(\sum_{i\neq j}\left[Q_{2\xi}\left(\vc{u}\right)-\mathbb{I}\left\{ \left|\left\langle \tilde{\vc{a}}_{0,i},\vc{u}\right\rangle \right|\geq\xi\right\} \right]+Q_{2\xi}\left(\vc{u}\right)\\
& \hspace{4em}-\mathbb{I}\left\{ \left|\left\langle \tilde{\vc{a}}_{0,j}',\vc{u}\right\rangle \right|\geq\xi\right\}+\mathbb{I}\left\{ \left|\left\langle \tilde{\vc{a}}_{0,j}',\vc{u}\right\rangle \right|\geq\xi\right\} \\
& \hspace{8em}-\mathbb{I}\left\{ \left|\left\langle \tilde{\vc{a}}_{0,j},\vc{u}\right\rangle \right|\geq\xi\right\} \bigg) \\
& -\sup_{\vc{u}}\bigg(\sum_{i\neq j}\left[Q_{2\xi}\left(\vc{u}\right)-\mathbb{I}\left\{ \left|\left\langle \tilde{\vc{a}}_{0,i},\vc{u}\right\rangle \right|\geq\xi\right\} \right]+Q_{2\xi}\left(\vc{u}\right) \\
& \hspace{8em}-\mathbb{I}\left\{ \left|\left\langle \tilde{\vc{a}}_{0,j}',\vc{u}\right\rangle \right|\geq\xi\right\} \bigg) \\
& \leq\sup_{\vc{u}}\bigg(\sum_{i\neq j}\left[Q_{2\xi}\left(\vc{u}\right)-\mathbb{I}\left\{ \left|\left\langle \tilde{\vc{a}}_{0,i},\vc{u}\right\rangle \right|\geq\xi\right\} \right]+Q_{2\xi}\left(\vc{u}\right) \\
& \hspace{8em}-\mathbb{I}\left\{ \left|\left\langle \tilde{\vc{a}}_{0,j}',\vc{u}\right\rangle \right|\geq\xi\right\} \bigg) \\
& +\sup_{\vc{u}}\left(\mathbb{I}\left\{ \left|\left\langle \tilde{\vc{a}}_{0,j}',\vc{u}\right\rangle \right|\geq\xi\right\} -\mathbb{I}\left\{ \left|\left\langle \tilde{\vc{a}}_{0,j},\vc{u}\right\rangle \right|\geq\xi\right\} \right) \\
& -\sup_{\vc{u}}\bigg(\sum_{i\neq j}\left[Q_{2\xi}\left(\vc{u}\right)-\mathbb{I}\left\{ \left|\left\langle \tilde{\vc{a}}_{0,i},\vc{u}\right\rangle \right|\geq\xi\right\} \right]+Q_{2\xi}\left(\vc{u}\right) \\
& \hspace{8em}-\mathbb{I}\left\{ \left|\left\langle \tilde{\vc{a}}_{0,j}',\vc{u}\right\rangle \right|\geq\xi\right\} \bigg) \\
& =\sup_{\vc{u}}\left(\mathbb{I}\left\{ \left|\left\langle \tilde{\vc{a}}_{0,j}',\vc{u}\right\rangle \right|\geq\xi\right\} -\mathbb{I}\left\{ \left|\left\langle \tilde{\vc{a}}_{0,j},\vc{u}\right\rangle \right|\geq\xi\right\} \right)\leq1
\end{align*}

and from symmetry we get that 
\[
\left|f\left(\tilde{\vc{a}}_{0,1},...,\tilde{\vc{a}}_{0,j},...,\tilde{\vc{a}}_{0,m}\right)-f\left(\tilde{\vc{a}}_{0,1},...,\tilde{\vc{a}}_{0,j}',...,\tilde{\vc{a}}_{0,m}\right)\right|\leq1
\]
and can apply the inequality to get to get with probability at least $1-e^{-2t^{2}}$
\begin{multline*}
\sup_{\vc{u}}\sum_{i=1}^{m}\left[Q_{2\xi}\left(\vc{u}\right)-\mathbb{I}\left\{ \left|\left\langle \tilde{\vc{a}}_{0,i},\vc{u}\right\rangle \right|\geq\xi\right\} \right] \\ \leq\mathbb{E}\left[\sup_{\vc{u}}\sum_{i=1}^{m}\left(Q_{2\xi}\left(\vc{u}\right)-\mathbb{I}\left\{ \left|\left\langle \tilde{\vc{a}}_{0,i},\vc{u}\right\rangle \right|\geq\xi\right\} \right)\right]+t\sqrt{m}.
\end{multline*}
We introduce a soft indicator function with parameter $\xi$ as 
\[
\psi_{\xi}\left(s\right)=\begin{cases}
0 & \left|s\right|\leq\xi\\
\frac{\left(\left|s\right|-\xi\right)}{\xi} & \xi\leq\left|s\right|\leq2\xi\\
1 & 2\xi\leq\left|s\right|
\end{cases}.
\]
This function has the following two properties, which will be used
next:
\begin{itemize}
\item $\mathbb{I}\left\{ \left|s\right|\geq2\xi\right\} \leq\psi_{\xi}\left(s\right)\leq\mathbb{I}\left\{ \left|s\right|\geq2\xi\right\} $,
\item $\xi\psi_{\xi}\left(\left|s\right|\right)$ is a contraction, 1-Lipschitz
function.
\end{itemize}
Hence:
\begin{align*}
&\mathbb{E}\left[\sup_{\vc{u}}\sum_{i=1}^{m}\left(Q_{2\xi}\left(\vc{u}\right)-\mathbb{I}\left\{ \left|\left\langle \tilde{\vc{a}}_{0,i},\vc{u}\right\rangle \right|\geq\xi\right\} \right)\right] \\
=&\mathbb{E}\left[\sup_{\vc{u}}\sum_{i=1}^{m}\left(\mathbb{E}\left[\mathbb{I}\left\{ \left|\left\langle \tilde{\vc{a}}_{0,i},\vc{u}\right\rangle \right|\geq2\xi\right\} \right]-\mathbb{I}\left\{ \left|\left\langle \tilde{\vc{a}}_{0,i},\vc{u}\right\rangle \right|\geq\xi\right\} \right)\right] \\
\leq &\mathbb{E} \left[\sup_{\vc{u}}\sum_{i=1}^{m}\left(\mathbb{E}\left[\psi_{\xi}\left(\left\langle \tilde{\vc{a}}_{0,i},\vc{u}\right\rangle \right)\right]-\psi_{\xi}\left(\left\langle \tilde{\vc{a}}_{0,i},\vc{u}\right\rangle \right)\right)\right] \\
\overset{\left(1\right)}{\leq}& 2\mathbb{E} \left[\sup_{\vc{u}}\sum_{i=1}^{m}\epsilon_{i}\psi_{\xi}\left(\left\langle \tilde{\vc{a}}_{0,i},\vc{u}\right\rangle \right)\right] \\
\overset{\left(2\right)}{\leq} &\frac{2}{\xi} \mathbb{E}\left[\sup_{\vc{u}}\sum_{i=1}^{m}\epsilon_{i}\left\langle \tilde{\vc{a}}_{0,i},\vc{u}\right\rangle \right].
\end{align*} 
using now Gine-Zinn symmetrization (\cite{ginezinn} Lemma 2.3.1) for (1) and Rademacher comparison principle (\cite{Rademacher_comparison} from equation 4.20) for inequality (2).
\begin{align*}
& \Longrightarrow\frac{\xi}{2}\inf_{\vc{u}}Q_{2\xi}\left(\vc{u}\right)-\frac{\xi}{2m}\sup_{\vc{u}}\sum_{i=1}^{m}\big[Q_{2\xi}\left(\vc{u}\right) \\
& \hspace{9em}-\mathbb{I}\left\{ \left|\left\langle \tilde{\vc{a}}_{0,i},\vc{u}\right\rangle \right|\geq\xi\right\} \big]-\frac{\xi t}{\sqrt{m}} \\
& \geq\frac{\xi}{2}\inf_{\vc{u}}Q_{2\xi}\left(\vc{u}\right)-\frac{\xi}{2m}\frac{2}{\xi}\mathbb{E}\left[\sup_{\vc{u}}\sum_{i=1}^{m}\varepsilon_{i}\left\langle \tilde{\vc{a}}_{0,i},\vc{u}\right\rangle \right] \\
& \hspace{14em}-\frac{\xi t}{2\sqrt{m}}-\frac{\xi t}{\sqrt{m}} \\
& =\frac{\xi}{2}\inf_{\vc{u}}Q_{2\xi}\left(\vc{u}\right)-\frac{1}{m}\mathbb{E}\left[\sup_{\vc{u}}\sum_{i=1}^{m}\varepsilon_{i}\left\langle \tilde{\vc{a}}_{0,i},\vc{u}\right\rangle \right]-\frac{3\xi t}{2\sqrt{m}} \\
& \Longrightarrow\inf_{\vc{u}\in E}\Vert\vc{A}\vc{u}\Vert_{2}\geq\frac{\xi\sqrt{m}}{2}\inf_{\vc{u}\in E}Q_{2\xi}\left(\vc{u}\right)-\\
& \hspace{8em}\mathbb{E}\left[\sup_{\vc{u}\in E}\frac{1}{\sqrt{m}}\sum_{i=1}^{m}\varepsilon_{i}\left\langle \tilde{\vc{a}}_{0,i},\vc{u}\right\rangle \right]-\frac{3\xi t}{2}
\end{align*}
with probability $\geq1-2e^{-2t^{2}}$, where we used the union bound
to lower bound the probability that both McDiarmid's bounded different
inequalities hold together. Notice that, because of the symmetric distribution
of $\vc{a}_{0,i}$, $\tilde{\vc{a}}_{0,i}$ is distributed
like $\vc{a}_{0,i}$ and therefore:
\begin{multline*}
\mathbb{E}_{\tilde{\vc{a}}_0}\left[\sup_{\vc{u}\in E}\frac{1}{\sqrt{m}}\sum_{i=1}^{m}\varepsilon_{i}\left\langle \tilde{\vc{a}}_{0,i},\vc{u}\right\rangle \right]=\\ \mathbb{E}_{\vc{a}_0}\left[\sup_{\vc{u}\in E}\frac{1}{\sqrt{m}}\sum_{i=1}^{m}\varepsilon_{i}\left\langle \vc{a}_{0,i},\vc{u}\right\rangle \right]
\end{multline*}

and
\[
\mathbb{P}\left(\left|\left\langle \tilde{\vc{a}}_{0,i},\vc{u}\right\rangle \right|\geq\xi\right)=\mathbb{P}\left(\left|\left\langle \vc{a}_{0,i},\vc{u}\right\rangle \right|\geq\xi\right).
\]
\begin{corollary}\label{NSP of SG old}
  Let $\vc{A}\in\mathbb{R}^{m\times n}$ be a matrix whose rows are
  i.i.d copies of a random vector $\vc{a}$, which can be written as
  $\vc{a}=\vc{a}_0+\vc{e}$, where $\vc{e}$ is fixed and $\vc{a}_0$ is
  \begin{itemize}
  \item symmetrically distributed,
  \item is sub-isotropic, i.e. $\mathbb{E}\left[\left\langle\vc{a}_0,\vc{v}\right\rangle^2\right]\geq\Vert\vc{v}\Vert_2^2$ for all $\vc{v}\in\mathbb{R}^n$,
  \item 1-subgaussian, i.e., $\mathbb{E}\left[\exp\left(t\left\langle\vc{a}_0,\vc{v}\right\rangle\right)\right]\leq\exp\left(t^2\right)$ for all $\vc{v}\in\mathbb{R}^n$ with $\Vert\vc{v}\Vert_2^2\leq 1$ and $t\in\mathbb{R}$,
  \end{itemize}
  then, $\vc{A}$ has the $\ell_{q}$-NSP for all $q\geq 2$ with probability
  at least $1-2\textnormal{e}^{-m/288}$ if
  \begin{equation}
    m\geq s^{2-2/q}\left(\frac{1536}{\rho}\right)^2\left[2+\sqrt[]{\log\left(\frac{\textnormal{e}n}{s}\right)}\right]^2
\label{eq:phasetrans:debiased old}
\end{equation}
\end{corollary}
\begin{proof}
For our proof we use results from the proof of Corollary 5.2 from \cite{SG_extension}. There, they first use the sub-isotropic and sub-gaussian properties to bound $Q_\xi\left(T_{\rho,s}^q,\vc{a}_0\right)$. Notice, that for $q\geq 2$ and some $\vc{u}\in T_{\rho,s}^q$ we have $1=\Vert\vc{u}\Vert_1\leq\Vert\vc{u}\Vert_2$. Therefore:
\begin{align*}
\mathbb{P}\left(\left|\left\langle\vc{a}_0,\vc{u}\right\rangle\right|>\xi\right) & =\mathbb{P}\left(\left|\left\langle\vc{a}_0,\frac{\vc{u}}{\Vert\vc{u}\Vert_2}\right\rangle\right|>\frac{\xi}{\Vert\vc{u}\Vert_2}\right) \\
& \geq\mathbb{P}\left(\left|\left\langle\vc{a}_0,\frac{\vc{u}}{\Vert\vc{u}\Vert_2}\right\rangle\right|>\xi\right).
\end{align*}
With Paley-Zygmund inequality we then have \cite[proof of Corollary 5.2]{SG_extension}:
\[
\mathbb{P}\left(\left|\left\langle\vc{a}_0,\frac{\vc{u}}{\Vert\vc{u}\Vert_2}\right\rangle\right|>\xi\right)\geq\left(1-\xi^2\right)^2
\]
for $0\leq\xi\leq 1$.
Recall equation (\ref{W_T bound}), and combine it with the following bound:
\[
W_{m}\left(\Sigma_s^2,\vc{a}_0\right)\leq 4\sqrt{2}\left(2\hspace{0.5em}\sqrt[]{s}+\hspace{0.5em}\sqrt[]{s\log\left(\frac{\textnormal{e}n}{s}\right)}\right).
\]
The last bound is due to Dudley's inequality, taken from \cite{Foucart}, Theorem 8.23, and, again, from the proof of Corollary 5.2 from \cite{SG_extension}. 

With these bounds, we can now use our Theorem \ref{modified mendelson}, the conditions of which, i.e. the symmetric distribution of $\vc{a}_0$, are satisfied. By choosing, for example, $t=\sqrt[]{m}/24$ and $\xi^2=1/8$, we have with probability at least $1-2\textnormal{e}^{-2m/24^2}$.	

\begin{align*}
\inf_{\vc{v}\in T_{\rho,s}^q}\Vert\vc{A}\vc{v}\Vert_{2} & \geq\frac{\xi}{2}\sqrt{m}Q_{2\xi}\left(T_{\rho,s}^q,\vc{a}_0\right)-\frac{3\xi t}{2}-2W_{m}\left(T_{\rho,s}^q,\vc{a}_0\right) \\
& \geq\frac{1}{2}\xi\sqrt{m}\left(1-4\xi^2\right)^2-\frac{3\xi t}{2} \\ 
& \hspace{2em}-\frac{24\sqrt{2}}{\rho}s^{1/2-1/q}\left(2\sqrt{s}+\sqrt[]{s\log\left(\frac{\textnormal{e}n}{s}\right)}\right) \\
& =\frac{1}{2\sqrt{8}}\left(\frac{\sqrt{m}}{4}-\frac{\sqrt{m}}{8}\right) \\
& \hspace{2em}-\frac{24\sqrt{2}}{\rho}s^{1/2-1/q}\left(2\sqrt{s}+\sqrt[]{s\log\left(\frac{\textnormal{e}n}{s}\right)}\right).
\end{align*}
Indeed, this bound is positive provided that we have \eqref{eq:phasetrans:debiased}
\end{proof}
\fi
\subsection{Bounds on $Q_{2\xi}\left(T_{\rho,s}^2,\vc{a}\right)$ and
  $W_m\left(\Sigma_s^2,\vc{a}\right)$ }
\label{Q and W bounds}

We can use Paley-Zygmund inequality to bound $Q_{2\xi}\left(T_{\rho,s}^2,\vc{a}\right)$
from below. Let $\vc{z}$ be a unit vector, and $\theta\in\left[0,1/2\right]$, and recall that $\vc{a}=\vc{g}+\mu\vc{1}$ where $\vc{g}\sim{\cal N}\left(\vc{0},\vc{I}\right)$. Define $S=\left(\vc{a}^{T}\vc{z}\right)^{2}$. Then for any unit vector $\vc{z}$:
\begin{align*}
S & =\left(\vc{a}^{T}\vc{z}\right)^{2}=\left(\mu\vc{1}^{T}\vc{z}+\vc{g}^{T}\vc{z}\right)^{2} \\ & =\mu^{2}\vc{z}^{T}\vc{1}\vc{1}^{T}\vc{z}+\vc{z}^{T}\vc{g}\vc{g}^{T}\vc{z}+2\mu\vc{1}^{T}\vc{z}\vc{z}^{T}\vc{g}.
\end{align*}
Now,
\[
\mathbb{E}\left[S\right]=\mu^{2}\vc{z}^{T}\vc{1}\vc{1}^{T}\vc{z}+\Vert\vc{z}\Vert_{2}^{2}\geq 1,
\]
and therefore
\[
\mathbb{P}\left(\left|\vc{a}^{T}\vc{u}\right|\geq\theta\right)\geq\mathbb{P}\left(S\geq\theta^{2}\mathbb{E}\left[S\right]\right)\geq\left(1-\theta^{2}\right)^{2}\frac{\mathbb{E}\left[S\right]^{2}}{\mathbb{E}\left[S^{2}\right]}.
\]
In addition,
\begin{align*}
S^{2}= & \mu^{4}\left(\vc{z}^{T}\vc{1}\vc{1}^{T}\vc{z}\right)^{2}+\left(\vc{z}^{T}\vc{g}\right)^{4}+ \\ & 4\mu^{2}\left(\vc{1}^{T}\vc{z}\right)^{2}\vc{z}^{T}\vc{g}\vc{g}^{T}\vc{z}+2\mu^{2}\vc{z}^{T}\vc{1}\vc{1}^{T}\vc{z}\vc{z}^{T}\vc{g}\vc{g}^{T}\vc{z}+ \\ & 4\mu^{3}\left(\vc{1}^{T}\vc{z}\right)^{3}\vc{z}^{T}\vc{g}+4\mu\vc{1}^{T}\vc{z}\left(\vc{z}^{T}\vc{g}\right)^{3},
\end{align*}

and since $\vc{z}^{T}\vc{g}\sim{\cal N}\left(0,1\right)$:
\begin{align*}
\mathbb{E}\left[S^{2}\right] & =\mu^{4}\left(\vc{z}^{T}\vc{1}\vc{1}^{T}\vc{z}\right)^{2}+3+4\mu^{2}\left(\vc{1}^{T}\vc{z}\right)^{2}\Vert\vc{z}\Vert_{2}^{2} \\
& \hspace{8em}+2\mu^{2}\vc{z}^{T}\vc{1}\vc{1}^{T}\vc{z}\Vert\vc{z}\Vert_{2}^{2} \\
& =\mu^{4}\left(\vc{z}^{T}\vc{1}\vc{1}^{T}\vc{z}\right)^{2}+3+6\mu^{2}\left(\vc{1}^{T}\vc{z}\right)^{2}\Vert\vc{z}\Vert_{2}^{2}.
\end{align*}

Finally:
\begin{align*}
\frac{\mathbb{E}^{2}\left[S\right]}{\mathbb{E}\left[S^{2}\right]} & =\frac{\mu^{4}\left(\vc{1}^{T}\vc{z}\right)^{4}+2\mu^{2}\left(\vc{1}^{T}\vc{z}\right)^{2}+1}{\mu^{4}\left(\vc{1}^{T}\vc{z}\right)^{4}+6\mu^{2}\left(\vc{1}^{T}\vc{z}\right)^{2}+3} \\
& =1-\frac{4\mu^{2}\left(\vc{1}^{T}\vc{z}\right)^{2}+2}{\mu^{4}\left(\vc{1}^{T}\vc{z}\right)^{4}+6\mu^{2}\left(\vc{1}^{T}\vc{z}\right)^{2}+3},
\end{align*}

and since 
\[
\mu^{4}\left(\vc{1}^{T}\vc{z}\right)^{4}+6\mu^{2}\left(\vc{1}^{T}\vc{z}\right)^{2}+3\geq 6\mu^{2}\left(\vc{1}^{T}\vc{z}\right)^{2}+3
\]
then
\[
\frac{4\mu^{2}\left(\vc{1}^{T}\vc{z}\right)^{2}+2}{\mu^{4}\left(\vc{1}^{T}\vc{z}\right)^{4}+6\mu^{2}\left(\vc{1}^{T}\vc{z}\right)^{2}+3}\leq\frac{2}{3}
\]
and then 
\[
\frac{\mathbb{E}^{2}\left[S\right]}{\mathbb{E}\left[S^{2}\right]}=1-\frac{4\mu^{2}\left(\vc{1}^{T}\vc{z}\right)^{2}+2}{\mu^{4}\left(\vc{1}^{T}\vc{z}\right)^{4}+6\mu^{2}\left(\vc{1}^{T}\vc{z}\right)^{2}+3}\geq\frac{1}{3}
\]
and
\[
\mathbb{P}\left(\left|\vc{a}^{T}\vc{u}\right|\geq\theta\right)\geq\left(1-\theta^{2}\right)^{2}\frac{\mathbb{E}\left[S\right]^{2}}{\mathbb{E}\left[S^{2}\right]}\geq\frac{\left(1-\theta^{2}\right)^{2}}{3}.
\]

We turn now to look at $W_m\left(\Sigma_s^2,\vc{a}\right)$. Denote:
\[
\vc{h}_{0}=\frac{1}{\sqrt{m}}\sum_{k=1}^{m}\epsilon_{k}\vc{g}_{k}
\]
with $\vc{g}_{k}\sim{\cal N}\left(\vc{0},\vc{I}\right)$,
and from symmetry of $\epsilon_{k}$ and the normal distribution, $\vc{h}_{0}\sim{\cal N}\left(\vc{0},\vc{I}\right).$
\begin{align*}
\Longrightarrow\mathbb{E} & \left[\sup_{\vc{u}\in\Sigma_{s}^{2}}\left\langle\vc{h},\vc{u}\right\rangle\right] =\mathbb{E}\left[\sup_{\vc{u}\in\Sigma_{s}^{2}}\left\langle\vc{h}_{0},\vc{u}\right\rangle+\frac{\mu}{\sqrt{m}}\sum_{k=1}^{m}\epsilon_{k}\left\langle\vc{1},\vc{u}\right\rangle\right] \\ &
\leq\mathbb{E}\left[\sup_{\vc{u}\in\Sigma_{s}^{2}}\left\langle\vc{h}_{0},\vc{u}\right\rangle\right]+\mathbb{E}\left[\sup_{\vc{u}\in\Sigma_{s}^{2}}\frac{\mu}{\sqrt{m}}\sum_{k=1}^{m}\epsilon_{k}\left\langle\vc{1},\vc{u}\right\rangle\right].
\end{align*}

The first term is proportional to the Gaussian width of $\Sigma_{s}^{2}$
so we can use known results (Proposition 3.10 in \cite{Gausswidth}). For the second term
\begin{align*}
& \mathbb{E}\left[\sup_{\vc{u}\in\Sigma_{s}^{2}}\frac{\mu}{\sqrt{m}}\sum_{k=1}^{m}\epsilon_{k}\left\langle\vc{1},\vc{u}\right\rangle\right] \\ 
 \leq & \mathbb{E}\left[\sup_{\vc{u}\in\Sigma_{s}^{2}}\frac{\mu}{\sqrt{m}}\left|\sum_{k=1}^{m}\epsilon_{k}\right|\Vert\vc{1}\Vert_{\infty}\Vert\vc{u}\Vert_{1}\right] \\
& \leq\frac{\mu}{\sqrt{m}}\sqrt{s}\mathbb{E}\left[\left|\sum_{k=1}^{m}\epsilon_{k}\right|\right]\leq\sqrt{2s}\mu,
\end{align*}

where the last inequality follows from a Khintchine argument (see
\cite{Foucart}, Corollary 8.7), and from bounds on the Gaussian width
of the set $\Sigma_s^2$ (Proposition 3.10 in \cite{Gausswidth}) we
get
\begin{align*}
W_{m}\left(\Sigma_{s}^{2},\vc{a}\right) & =\mathbb{E}\left[\sup_{\vc{u}\in\Sigma_{s}^{2}}\left\langle\vc{h},\vc{u}\right\rangle\right] \\ & \leq\left(\sqrt{2s\log\left(\frac{n}{s}\right)+2s}+\mu\sqrt{2s}\right).
\end{align*}
\if0
\subsection{Connection between $\tau_0$ and $\kappa$}
In \cite{Slawsky} and \cite{c1} there are two quantities that quantify
the property of belonging to the class ${\cal M}^{+}$, $\tau_{0}$
defined in Equation (2.2) in \cite{Slawsky} and $\kappa\left(\vc{A}\right)$
defined in Equation \eqref{eq:con num kappa} here. In this appendix we
explain the connection between the two. First we repeat the definition of $\tau_0$:
\begin{equation*}
\tau_0=\max\bigg\{\tau\,\,: \,\, \exists\vc{t}\in\mathbb{R}^m, \, \Vert\vc{t}\Vert_2\leq1 \text{ such that } \vc{A}^T\vc{t}\geq\tau\vc{1}  \bigg\}.
\end{equation*}
This is in fact the distance of $\text{Conv}\left(\vc{a}_1,...,\vc{a}_n\right)$ from the origin.

By the definition of $\kappa\left(\vc{A}\right)$ we can write:
\begin{multline*}
\kappa\left(\vc{A}\right)=\min\bigg\{ \frac{\max_{i}\left|\vc{W}_{ii}\right|}{\min_{i}\left|\vc{W}_{ii}\right|}\,\,: \\ \exists\vc{t}\in\mathbb{R}^{m}\,\,\text{with}\,\,\vc{W}=\text{diag}\left(\vc{A}^{T}\vc{t}\right)>0\bigg\} .
\end{multline*}

Notice that $\kappa$ is invariant to rescaling of $\vc{t}$,
therefore the following definition is equivalent:
\begin{multline*}
    \kappa\left(\vc{A}\right)=\min\bigg\{ \frac{\max_{i}\left|\vc{W}_{ii}\right|}{\min_{i}\left|\vc{W}_{ii}\right|}\,\,: \\ \exists\vc{t}\in\mathbb{R}^{m}\,\,\text{with}\,\,\vc{W}=\text{diag}\left(\vc{A}^{T}\vc{t}\right)>0,\,\max_{i}\vc{a}_{i}^{T}\vc{t}=1\bigg\} .
\end{multline*}

In addition:
\[
\frac{\max_{i}\left|\vc{W}_{ii}\right|}{\min_{i}\left|\vc{W}_{ii}\right|}=\frac{\max_{i}\left|\left(\vc{A}^{T}\vc{t}\right)_{i}\right|}{\min_{i}\left|\left(\vc{A}^{T}\vc{t}\right)_{i}\right|}=\frac{\max_{i}\left|\vc{a}_{i}^{T}\vc{t}\right|}{\min_{i}\left|\vc{a}_{i}^{T}\vc{t}\right|},
\]
which results, with the new normalization of $\vc{t}$, in
the following definition of $\kappa$:
\begin{multline*}
    \kappa\left(\vc{A}\right)=\min\bigg\{ \frac{1}{\min_{i}\left|\vc{a}_{i}^{T}\vc{t}\right|}\,\,: \\ \exists\vc{t}\in\mathbb{R}^{m}\,\,\text{with}\,\,\vc{W}=\text{diag}\left(\vc{A}^{T}\vc{t}\right)>0,\,\max_{i}\vc{a}_{i}^{T}\vc{t}=1\bigg\} .
\end{multline*}
The condition $\text{diag}\left(\vc{A}^{T}\vc{t}\right)>0$
is equivalent to the condition: $\exists\tau>0\,\,\text{such that }\,\,\min_{i}\vc{a}_{i}^{T}\vc{t}=\tau$,
or equivalently $\exists\tau>0\,\,\text{such that }\,\,\vc{A}^{T}\vc{t}\geq\tau\vc{1}$.
We now have:
\[
\kappa\left(\vc{A}\right)=\min_{\vc{t}\in\mathbb{R}^{m},\tau>0}\left\{ \frac{1}{\tau}\,\,:\,\,\vc{A}^{T}\vc{t}\geq\tau\vc{1},\,\max_{i}\vc{a}_{i}^{T}\vc{t}=1\right\} ,
\]
and we choose to define:
\[
\tilde{\tau}_{0}=\frac{1}{\kappa}=\max_{\vc{t}\in\mathbb{R}^{m},\tau>0}\left\{ \tau\,\,:\,\,\vc{A}^{T}\vc{t}\geq\tau\vc{1},\,\max_{i}\vc{a}_{i}^{T}\vc{t}=1\right\} .
\]
We denote the optimal $\vc{t}$ as $\hat{\vc{t}}$,
and we get the relation $\tau_{0}=\frac{\tilde{\tau}_{0}}{\Vert\hat{\vc{t}}\Vert_{2}}$.
The last relation is achieved by rescaling the hyper-plane equation
$\hat{\vc{t}}^{T}\vc{x}-\tau=0$ to get a hyper-plane
with unit-norm normal to the hyper-plane, as appears is the definition
of $\tau_{0}$. Even though the normal to the hyper-plane in the definition
of $\tau_{0}$ has a norm smaller or equal to 1, the optimum is achieved
for unit norm, otherwise we can find a feasible larger $\tau$.

Indeed, $\vc{t}$ is the normal to a separating plane, which
separates $\text{Conv}\left(\vc{a}_{1},...,\vc{a}_{n}\right)$
from the origin, and is tangent to $\text{Conv}\left(\vc{a}_{1},...,\vc{a}_{n}\right)$.
\fi
\if0
\subsection{Significant bias doesn't affect the NSP}
\label{subsec:large:bias}
\textcolor{red}{I think we should keep this discussion, cited already
  above. I used this now in ``a constructive way'', i.e., as a simple
  argument showing that strong bias can not really affect NSP''.
  Thus, the direct conclusions from the small ball method are only a
  proof artifact...
}
\textcolor{blue}{Sounds fair enough, but because this case covers all relevant cases (other cases don't satisfy the M+) it is almost just a different proof for the same thing}.

The following observations:
\begin{equation}
\mu\geq\sup_{v}\frac{2\|Gv\|_2}{\sqrt{m}\|v\|_1}    
\end{equation}
means that we have for all $v$:
\begin{equation}\begin{split}
    \|Gv\|_2
    &\leq \sqrt{m}\mu\|v\|_1-\|Gv\|_2\\
    &= \|\mu11^Tv\|_2-\|Gv\|_2\\
    &\leq \|(\mu11^T+G)v\|_2
    \end{split}
\end{equation}
To get some impression, let us consider the expectation:
\begin{equation}
\mu^2\geq
\sup_{v}\frac{4\mathbb{E}\|Gv\|^2_2}{m\|v\|^2_1}    
=\sup_{v}\frac{4\|v\|^2_2}{m\|v\|^2_1}    
\geq\sup_{v}\frac{4\|v\|^2_2}{m(m\|v\|^2_2)}=\frac{4}{m^2}    
\end{equation}
by using $\|v\|^2_1\leq m \|v\|_2^2$. But this was the wrong direction. Now, since
$\|v\|_2\leq\|v\|_1$ we get instead: If
\begin{equation}
\sup_{v}\frac{4\|v\|^2_2}{m\|v\|^2_1}    
\leq\sup_{v}\frac{4\|v\|^2_2}{m\|v\|^2_2}=\frac{4}{m} 
\overset{(!!)}{\leq}\mu^2
\end{equation}
Summarizing, if $\mu\geq 2/\sqrt{m}$ we have a trivial conclusion for the NSP, meaning
that the biased matrix has NSP with the same parameters $\tau$ and $\rho$ as the unbiased matrix.

Notice that
\begin{align}
    &\mathbb{P}\left(\left\Vert\vc{Gv}\right\Vert_2^2\leq\left\Vert\vc{v}\right\Vert_2^2+ t\right)\\
    &=\mathbb{P}\left(\left\Vert\vc{Gv}\right\Vert_2^2\leq\mathbb{E}\left[\left\Vert\vc{Gv}\right\Vert_2^2\right]+ t\right)\\
    &\geq\mathbb{P}\left(\left|\left\Vert\vc{Gv}\right\Vert_2^2-\mathbb{E}\left[\left\Vert\vc{Gv}\right\Vert_2^2\right]\right|\leq t\right)\\
    &\geq 1-2\exp\left(-c\min\left\{\frac{t^2}{K^2m},\frac{t}{K}\right\}\right),
\end{align}
where the last step is due to Bernstein inequality, and the fact that sub-exponential is sub-Gaussian squared \cite[Proposition 5.16, Lemma 5.14]{VershyninIntroduction}. The $\psi_2$-norm of $\left|\left<\vc{g}_i,\vc{v}\right>\right|$ is upper bounded by $C\left\Vert\vc{v}\right\Vert_2K^\prime$. Therefore, $K$, which is the $\psi_1$-norm of the sub-exponential RV $\left|\left<\vc{g}_i,\vc{v}\right>\right|^2$, is upper bounded by $2C\left\Vert\vc{v}\right\Vert_2^2K^{\prime 2}$ \cite[Lemma 5.14]{VershyninIntroduction}. The assumption $\mathbb{E}\left[\left\Vert\vc{Gv}\right\Vert_2^2\right]=\Vert\vc{v}\Vert_2^2$ implies $K^\prime={\cal O}\left(1/\sqrt{m}\right)$. This implies that with probability at least $1-2\exp\left(-c^\prime\min\left\{\frac{t^2m}{\Vert\vc{v}\Vert_2^2},\frac{tm}{\Vert\vc{v}\Vert_2}\right\}\right)$ we have:
\begin{align*}
    \sup_{v}\frac{4\left(\|Gv\|^2_2-t\right)}{m\|v\|^2_1}\leq \sup_{v}\frac{4\|v\|^2_2}{m\|v\|^2_1}\leq\frac{4}{m}.
\end{align*}
In addition:
\begin{align*}
    &\sup_{v}\frac{4\|Gv\|^2_2}{m\|v\|^2_1}-\frac{t}{m\|\vc{v}^*\|_1^2}\leq\sup_{v}\frac{4\left(\|Gv\|^2_2-t\right)}{m\|v\|^2_1}\\
    &\sup_{v}\frac{4\|Gv\|^2_2}{m\|v\|^2_1}\leq\frac{4}{m}+\frac{t}{m\|\vc{v}^*\|_1^2}\\
    &\leq\frac{4}{m}+\frac{t}{m\|\vc{v}^*\|_2^2}\\
    &\leq\mu^2
\end{align*}
Since the LHS is invariant to rescaling, we can choose vectors such that $\|v\|_1=1$ or $\|v\|_2=1$, in any case it suffices that $\mu\geq\sqrt{4+t}/\sqrt{m}$ for the conclusions to hold.

Notice that for this analysis to hold we need 'superisotropic', meaning that $\mathbb{E}\left[\left\Vert\vc{Gv}\right\Vert_2^2\right]\leq\|\vc{v}\|_2^2$, which is incompatible with the conditions of Corollary \ref{NSP of SG}.

The regime where our original analysis is most relevant is therefore $\mu\approx 2/\sqrt{m}$. The bias is a dominant term in the bound on $m$ if 
\begin{align*}
    &\log\left(\frac{en}{s}\right)\approx\mu\sqrt{m}\approx 2,\\
    &s\approx\frac{n}{e}
\end{align*}
\textcolor{red}{Notice that small $\mu$ might not be enough for the ${\cal M}^+$}.

For example, if $\mu=2/\sqrt{m}$, then by Theorem \ref{M+ for SG}, $\vc{A}$ satisfies the ${\cal M}^+$ criterion with probability only bigger than $1-2n\exp\left(-1/8\right)$, which actually vanishes when the dimensions grow. Any smaller $\mu$ will only make it worse. 
\fi
\if0
\subsection{Alternative approach}
\begin{align*}
    \left\Vert\vc{Av}\right\Vert_2^2&=\sum_{i=1}^m\left|\left<\vc{A}_i,\vc{v}\right>\right|^2\\
    &\geq\sum_{i=1}^{2\left\lfloor m/2\right\rfloor}\left|\left<\vc{A}_i,\vc{v}\right>\right|^2\\
    &=\sum_{i=1}^{\left\lfloor m/2\right\rfloor}\left(\left|\left<\vc{A}_{2i-1},\vc{v}\right>\right|^2+\left|\left<\vc{A}_{2i},\vc{v}\right>\right|^2\right)\\
    (CS)&\geq\frac{1}{2}\sum_{i=1}^{\left\lfloor m/2\right\rfloor}\left(\left|\left<\vc{A}_{2i-1},\vc{v}\right>\right|+\left|\left<\vc{A}_{2i},\vc{v}\right>\right|\right)^2\\
    (Triangle)&\geq\frac{1}{2}\sum_{i=1}^{\left\lfloor m/2\right\rfloor}\left(\left|\left<\vc{A}_{2i-1},\vc{v}\right>-\left<\vc{A}_{2i},\vc{v}\right>\right|\right)^2\\
    &=\sum_{i=1}^{\left\lfloor m/2\right\rfloor}\left|\left<\frac{1}{\sqrt{2}}\left(\vc{A}_{2i-1}-\vc{A}_{2i}\right),\vc{v}\right>\right|^2\\
    &=\sum_{i=1}^{\left\lfloor m/2\right\rfloor}\left|\left<\vc{B}_i,\vc{v}\right>\right|^2\\
    &=\left\Vert\vc{Bv}\right\Vert_2^2,
\end{align*}
by denoting $\vc{B}_i:=\left(\vc{A}_{2i-1}-\vc{A}_{2i}\right)/\sqrt{2}$ and defining the corresponding matrix $\vc{B}$. $\vc{B}$ has $\left\lfloor m/2\right\rfloor$ independent sub-Gaussian rows with the same variance, but this time the entries have zero mean. 

Recall Definition \ref{def-q_NSP}. We have:
\begin{align*}
    \frac{\rho}{s^{1-1/q}}\Vert\vc{v}_{S^{C}}\Vert_{1}+\tau\Vert\vc{A}\vc{v}\Vert_{2}\geq\frac{\rho}{s^{1-1/q}}\Vert\vc{v}_{S^{C}}\Vert_{1}+\tau\Vert\vc{B}\vc{v}\Vert_{2}.
\end{align*}
Thus, the biased matrix $\vc{A}$ satisfies the NSP if the centered matrix $\vc{B}$ satisfies it. 
\fi
\if0
\subsection{\textcolor{red}{Comments on Slawsky and Hein- For us}}
Theorem 1 in \cite{Slawsky}, the slow-rate bound, considers a more general case where the system model is not necessarily linear.

In Theorem 2, the sensing matrix is assumed to satisfy the so called \emph{restricted eigenvalue condition} in addition to the 
$\cal{M}^+$ criterion. This condition probably plays a similar role to the NSP in the work. Notice that the RE condition is closely related to the $\ell_1$-NSP (see Section \ref{REcond}). 

Theorem 4 is important, as it bounds the $\ell_\infty$-norm of the error like $C\cdot \sigma\sqrt{\log \left(n\right)/m}$ and not like us like $c\cdot \sigma_s\left(\vc{x}\right)_1/s+D\tau\Vert\vc{z}\Vert_2$. Its use in the support recovery scheme is not perfectly clear to me.

The authors give examples of matrices which satisfy the conditions of their theorems. The conditions are stated as conditions on the Grammian matrix. We can compare their examples with our expected Grammian matrix, assuming independent columns: $\mathbb{E}\left[\frac{1}{m}\vc{A}^T\vc{A}\right]=\vc{I}+\mu^2\vc{1}_n\vc{1}_n^T$. This belongs to the class of Section 5.2 in \cite{Slawsky}- designs with non-negative Gram matrices, in average. This also belongs to a class discussed in  a following subsection titled \emph{Non-negative random designs with equi-correlation structure} (after proper normalization). Proposition 3 in this subsection states explicitly that these kind of matrices satisfy the self regularization property and the restricted eigenvalue condition. They consider sub-Gaussian matrices with i.i.d. entries (more restrictive than our model) and use a method that is valid in the case of independent rows, but of course in the i.i.d. case the rows and the columns are all independent. In Proposition 3 the matrix satisfies the RE condition and has the self regularizing property with sub-optimal probability ($1/n$) whereas we show en exponential probability. In addition, in Proposition 3 the number of equations scales like $\log\left(n\right)$ whereas in our results it scales like $\log\left(en/s\right)$. The authors give examples of distributions that belong to these class: uniform, Bernoulli, absolute value of Gaussian, Poisson. In fact, they also state the condition $m\geq C s\log\left(n\right)$ (without any dependency of $\mu$). Assuming independent rows and possibly dependent columns might give a different form of an expected Gram matrix, which is not discussed in \cite{Slawsky}.

As CS applications the authors give group testing and Network tomography as examples.

\subsection{Restricted Eigenvalue condition}\label{REcond}
There is a close connection between the NSP and the RE condition. First we look at the RE condition as defined in \cite{Slawsky}:

For ${\cal{J}}\left(s\right)=\left\{S\subseteq\left\{1,2,...,n \right\}\space:\space1\leq\left|S\right|\leq s\right\}$ and ${\cal{R}}\left(S,\alpha\right)=\left\{\vc{v}\in\mathbb{R}^n\space:\space\Vert\vc{v}_{S^c}\Vert_1\leq\alpha\Vert\vc{v}_S\Vert_1\right\}$ it holds
\[
\min_{S\in{\cal{J}}\left(s\right)}\space\min_{\vc{v}\in{\cal{R}}\left(S,\alpha\right)}\space\frac{\Vert\vc{A}\vc{v}\Vert_2^2}{m\Vert\vc{v}_S\Vert_2^2}\geq\phi\left(\alpha,s\right)>0.
\]
Notice that this definition is invariant to re-scaling of the vector $\vc{v}$, so we can change the definition of ${\cal{R}}$ without affecting the RE condition to ${\cal{R}}\left(S,\alpha\right)=\left\{\vc{v}\in\mathbb{R}^n\space:\space\Vert\vc{v}_{S^c}\Vert_1\leq\alpha\Vert\vc{v}_S\Vert_1,\space\Vert\vc{v}_S\Vert_2=1\right\}$. Assume that the RE condition is satisfied, then, by setting $\phi=s/m\tau^2$ and $\alpha\geq 1/\rho$ (why?) we have that for every $S\in{\cal{J}}\left(s\right)$ and  $\vc{v}\in{\cal{R}}$:
\begin{multline*}
\Vert\vc{v}_S\Vert_1\leq\sqrt{s}\Vert\vc{v}_S\Vert_2=\sqrt{s}\leq\tau\Vert\vc{A}\vc{v}\Vert_2 \\ \leq\rho\Vert\vc{v}_{S^c}\Vert_1+\tau\Vert\vc{A}\vc{v}\Vert_2,
\end{multline*}
which means that the $\ell_1$-NSP is satisfied, and for every $\vc{v}\notin{\cal{R}}$ the NSP is trivially satisfied. The $\ell_2$-NSP is immediate with $\phi=1/\tau^2m$, but this vanishes when $m$ increases. 

Conversely, a matrix that satisfies the $\ell_1$-NSP, satisfies the following bound for any $S\in{\cal{J}}$ and for any $\vc{v}\in\mathbb{R}^n$, in particular for any $\vc{v}\in{\cal{R}}\left(S,1/2\rho\right)$: 
\[
\Vert\vc{v}_S\Vert_1\leq\rho\Vert\vc{v}_{S^c}\Vert_1+\tau\Vert\vc{A}\vc{v}\Vert_2,
\]
which ,means that, for $\vc{v}\in{\cal{R}}$ 
\begin{align*}
\tau\Vert\vc{A}\vc{v}\Vert_2 & \geq\Vert\vc{v}_S\Vert_1-\rho\Vert\vc{v}_{S^c}\Vert_1 \\
& \geq\frac{1}{2}\Vert\vc{v}_{S}\Vert_1 \\
& \geq\frac{1}{2}\Vert\vc{v}_{S}\Vert_2,
\end{align*}
so the RE condition is satisfied with $\phi=1/4m\tau^2$.

Notice that, since the $\ell_2$-NSP implies the $\ell_1$-NSP with change of norms $\Vert\cdot\Vert_2\rightarrow\sqrt{s}\Vert\cdot\Vert_2$, the $\ell_2$-NSP still implies the RE condition with $\phi=1/4ms\tau^2$.
\fi

\bibliographystyle{IEEEtran}
\bibliography{main}

\end{document}